\newcommand{\cmark}{\ding{51}}%
\newcommand{\xmark}{\ding{55}}%
\DeclareMathOperator{\dist}{dist}
\DeclareMathOperator{\acosh}{acosh}
\DeclareMathOperator{\poly}{poly}
\DeclareMathOperator{\iw}{iw}
\DeclareMathOperator{\pw}{pw}
\title{Solving Vertex Cover in Polynomial Time on Hyperbolic Random
  Graphs} 
\titlerunning{Solving Vertex Cover in Polynomial Time on Hyperbolic
 Random Graphs}
\author{Thomas Bläsius}{Hasso Plattner Institute, University of Potsdam\\{Potsdam, Germany}}{thomas.blaesius@hpi.de}{}{}
\author{Philipp Fischbeck}{Hasso Plattner Institute, University of Potsdam\\{Potsdam, Germany}}{philipp.fischbeck@hpi.de}{}{}
\author{Tobias Friedrich}{Hasso Plattner Institute, University of Potsdam\\{Potsdam, Germany}}{tobias.friedrich@hpi.de}{https://orcid.org/0000-0003-0076-6308}{}
\author{Maximilian Katzmann}{Hasso Plattner Institute, University of Potsdam\\{Potsdam, Germany}}{maximilian.katzmann@hpi.de}{}{}
\authorrunning{T. Bläsius, P. Fischbeck, T. Friedrich, M. Katzmann}
\keywords{vertex cover, random graphs, hyperbolic geometry, efficient algorithm}
\begin{document}
\maketitle

\begin{abstract}
  The \textsc{VertexCover} problem is proven to be computationally
  hard in different ways: It is NP-complete to find an optimal
  solution and even NP-hard to find an approximation with reasonable
  factors.  In contrast, recent experiments suggest that on many
  real-world networks the run time to solve \textsc{VertexCover} is
  way smaller than even the best known FPT-approaches can explain.
  Similarly, greedy algorithms deliver very good approximations to the
  optimal solution in practice.

  We link these observations to two properties that are observed in
  many real-world networks, namely a heterogeneous degree distribution
  and high clustering.  To formalize these properties and explain the
  observed behavior, we analyze how a branch-and-reduce algorithm
  performs on hyperbolic random graphs, which have become increasingly
  popular for modeling real-world networks.  In fact, we are able to
  show that the \textsc{VertexCover} problem on hyperbolic random
  graphs can be solved in polynomial time, with high probability.

  The proof relies on interesting structural properties of hyperbolic
  random graphs.  Since these predictions of the model are interesting
  in their own right, we conducted experiments on real-world networks
  showing that these properties are also observed in practice.  When
  utilizing the same structural properties in an adaptive greedy
  algorithm, further experiments suggest that, on real instances, this
  leads to better approximations than the standard greedy approach
  within reasonable time.
\end{abstract}

\newpage

\section{Introduction}
\label{sec:introduction}

\textsc{VertexCover} is a fundamental NP-complete graph problem.  For
a given undirected graph~$G$ on $n$ vertices the goal is to find the
smallest vertex subset~$S$, such that each edge in~$G$ is incident to
at least one vertex in $S$.  Since, by definition, there can be no
edge between two vertices outside of~$S$, these remaining vertices
form an independent set.  Therefore, one can easily derive a maximal
independent set from a minimal vertex cover and vice versa.

Due to its NP-completeness there is probably no polynomial time
algorithm for solving \textsc{VertexCover}.  The best known algorithm
for \textsc{IndependentSet} runs in $1.1996^n
\poly(n)$~\cite{xn-eamis-17}.  To analyze the complexity of
\textsc{VertexCover} on a finer scale, several parameterized solutions
have been proposed.  One can determine whether a graph $G$ has a
vertex cover of size~$k$ by applying a \emph{branch-and-reduce}
algorithm.  The idea is to build a search tree by recursively
considering two possible extensions of the current vertex cover
(\emph{branching}), until a vertex cover is found or the size of the
current cover exceeds~$k$.  Each branching step is followed by a
\emph{reduce} step in which \emph{reduction rules} are applied to make
the considered graph smaller.  This branch-and-reduce technique yields
a simple $\mathcal{O}(2^k \poly(n))$ algorithm, where the exponential
portion comes from the branching.  The best known FPT (fixed-parameter
tractable) algorithm runs in $\mathcal{O}(1.2738^k + kn)$
time~\cite{ckx-i-10}, and unless ETH (exponential time hypothesis)
fails, there can be no $2^{o(k)}\poly(n)$
algorithm~\cite{cj-oespa-03}.


While these FPT approaches promise relatively small running times if
the considered network has a small vertex cover, the cover is large
for many real-world networks.  Nevertheless, it was recently observed
that applying a branch-and-reduce technique on real instances is very
efficient~\cite{ai-bf-16}.  Some of the considered networks had
millions of vertices, yet an optimal solution (also containing
millions of vertices) was computed within seconds.  Most instances
were solved so quickly since the expensive branching was not necessary
at all.  In fact, the application of the reduction rules alone already
yielded an optimal solution.  Most notably, applying the
\emph{dominance reduction rule}, which eliminates vertices whose
neighborhood contains a vertex together with its neighborhood, reduces
the graph to a very small remainder on which the branching, if
necessary, can be done quickly.  We trace the effectiveness of the
dominance rule back to two properties that are often observed in
real-world networks: a \emph{heterogeneous degree distribution} (the
network contains many vertices of small degree and few vertices of
high degree) and \emph{high clustering} (the neighbors of a vertex are
likely to be neighbors themselves).

We formalize these key properties using \emph{hyperbolic random
  graphs} to analyze the performance of the dominance rule.
Introduced by Krioukov et al.~\cite{kpk-h-10}, hyperbolic random
graphs are obtained by randomly distributing nodes in the hyperbolic
plane and connecting any two that are geometrically close.  The
resulting graphs feature a power-law degree distribution and high
clustering~\cite{gpp-rhg-12, kpk-h-10} (the two desired properties)
which can be tuned using parameters of the model.  Additionally, the
generated networks have a small diameter~\cite{fk-dhrg-15}.  All of
these properties have been observed in many real-world networks such
as the internet, social networks, as well as biological networks like
protein-protein interaction networks.  Furthermore, Bogun{\'a},
Papadopoulos, and Krioukov showed that the internet can be embedded
into the hyperbolic plane such that routing packages between network
participants greedily works very well~\cite{bpk-sihm-10}, indicating
that this network naturally fits into the hyperbolic space.

By making use of the underlying geometry, we show that
\textsc{VertexCover} can be solved in polynomial time on hyperbolic
random graphs, with high probability.  This is done by showing that
even a single application of the dominance reduction rule reduces a
hyperbolic random graph to a remainder with small pathwidth on which
\textsc{VertexCover} can then be solved efficiently.  Our analysis
provides an explanation for why \textsc{VertexCover} can be solved
efficiently on practical instances.  We note that, while our analysis
makes use of the underlying hyperbolic geometry, the algorithm itself
is oblivious to it.  Besides the running time the model predicts
certain structural properties that also point us to an adapted greedy
algorithm that is still very efficient and achieves better
approximation ratios.  We conducted experiments indicating that these
predictions (concerning the structural properties and improved
approximation) actually match the real world for a significant
fraction of networks.

\section{Preliminaries}
\label{sec:preliminaries}

Let $G = (V, E)$ be an undirected graph.  We denote the number of
vertices in $G$ with $n$.  The \emph{neighborhood} of a vertex $v$ is
defined as $N(v) = \{ w \in V \mid \{ v, w \} \in E \}$ and the size
of the neighborhood, called the \emph{degree} of $v$, is denoted by
$\deg(v)$.  For a subset $S \subseteq V$, we use $G[S]$ to denote the
induced subgraph of $G$ obtained by removing all vertices in $V
\setminus S$.  Furthermore, we use the shorthand notation $G_{\le d}$
to denote $G[ \{ v \in V \mid \deg(v) \le d\}]$.

\subparagraph{The Hyperbolic Plane.}

After choosing a designated origin $O$ in the two-dimensional
hyperbolic plane, together with a reference ray starting at $O$, a
point $p$ is uniquely identified by its \emph{radius} $r(p)$, denoting
the hyperbolic distance to $O$, and its \emph{angle} (or \emph{angular
  coordinate})~$\varphi(p)$, denoting the angular distance between the
reference ray and the line through $p$ and $O$.  The hyperbolic
distance between two points $p$ and $q$ is given by
\begin{align}
  \dist(p, q) = \acosh(\cosh(r(p))\cosh(r(q)) - \sinh(r(p))\sinh(r(q))\cos(\Delta_\varphi(\varphi(p), \varphi(q)))), \notag
\end{align}
where $\cosh(x) = (e^x + e^{-x}) / 2$, $\sinh(x) = (e^x - e^{-x}) / 2$
(both growing as $e^x/2 \pm o(1)$), and $\Delta_\varphi(p, q) = \pi -
| \pi - | \varphi(p) - \varphi(q) ||$ denotes the angular distance
between $p$ and $q$.  If not stated otherwise, we assume that
computations on angles are performed modulo $2\pi$.

We use $B_p(r)$ to denote a disk of radius $r$ centered at $p$, i.e.,
the set of points with hyperbolic distance at most $r$ to $p$.  Such a
disk has an area of $2\pi(\cosh(r) - 1)$ and circumference
$2\pi\sinh(r)$.  Thus, the area and the circumference of a disk in the
hyperbolic plane grow exponentially with its radius.  In contrast,
this growth is polynomial in Euclidean space.  Therefore, representing
hyperbolic shapes in the Euclidean geometry results in a distortion.
In the \emph{native representation}, used in our figures, circles can
appear teardrop-shaped (see Figure~\ref{fig:domination-probability}).

\subparagraph{Hyperbolic Random Graphs.}

Hyperbolic random graphs are obtained by distributing $n$ points
uniformly at random within the disk $B_O(R)$ and connecting any two of
them if and only if their hyperbolic distance is at most $R$; see
Figure~\ref{fig:proof-idea}.  The disk radius $R$ (which matches the
connection threshold) is defined as $R = 2\log(8n/(\pi
\bar{\kappa}))$, where $\bar{\kappa}$ is a constant describing the
desired average degree of the generated network. The coordinates for
the vertices are drawn as follows. For vertex~$v$ the angular
coordinate, denoted by $\varphi(v)$, is drawn uniformly at random from
$[0, 2\pi]$ and the radius of $v$, denoted by $r(v)$, is sampled
according to the probability density function $\alpha\sinh(\alpha r) /
(\cosh(\alpha R) - 1)$ for $r \in [0, R]$ and $\alpha \in (1/2, 1)$.
Thus,
\begin{align}
  \label{eq:probability-density-function}
  f(r) = \frac{1}{2\pi} \frac{\alpha \sinh(\alpha r)}{\cosh(\alpha R) - 1} = \frac{\alpha}{2 \pi} e^{-\alpha(R - r)}(1 + \Theta(e^{-\alpha R} - e^{-2\alpha r})),
\end{align}
is their joint distribution function for $r \in [0, R]$.  For $r > R$,
$f(r) = 0$.  The constant $\alpha \in (1/2, 1)$ is used to tune the
power-law exponent $\beta = 2\alpha + 1$ of the degree distribution of
the generated network.  Note that we obtain power-law exponents $\beta
\in (2, 3)$.  Exponents outside of this range are atypical for
hyperbolic random graphs.  On the one hand, for $\beta < 2$ the
average degree of the generated networks is divergent.  On the other
hand, for $\beta > 3$ hyperbolic random graphs degenerate: They
decompose into smaller components, none having a size linear in $n$.
The obtained graphs have logarithmic tree width~\cite{bfk-hrg-16},
meaning the \textsc{VertexCover} problem can be solved efficiently in
that case.

The probability for a given vertex to lie in a certain area $A$ of the
disk is given by its probability measure $\mu(A) = \int_A f(r)
\mathrm{d}r$.  The hyperbolic distance between two vertices $u$ and
$v$ increases with increasing angular distance between them.  The
maximum angular distance such that they are still connected by an edge
is bounded by~\cite[Lemma~6]{gpp-rhg-12}
\begin{align}
  \label{eq:maximum-angular-distance}
  \theta(r(u), r(v)) &= \arccos\left( \frac{\cosh(r(u))\cosh(r(v)) - \cosh(R)}{\sinh(r(u))\sinh(r(v))} \right) \notag \\
                     &= 2e^{(R - r(u) - r(v))/2}(1 + \Theta(e^{R - r(u) - r(v)})).
\end{align}

\subparagraph{Interval Graphs and Circular Arc Graphs.}

In an interval graph each vertex $v$ is identified with an interval
on the real line and two vertices are adjacent if and only if their
intervals intersect.  The \emph{interval width} of an interval graph
$G$, denoted by $\iw(G)$, is its maximum clique size, i.e., the
maximum number of intervals that intersect in one point.  For any
graph the interval width is defined as the minimum interval width over
all of its interval supergraphs.  Circular arc graphs are a superclass
of interval graphs, where each vertex is identified with a subinterval
of the circle called \emph{circular arc} or simply \emph{arc}.  The
interval width of a circular arc graph $G$ is at most twice the size
of its maximum clique, since one obtains an interval supergraph of $G$
by mapping the circular arcs into the interval $[0, 2\pi]$ on the real
line and replacing all intervals that were split by this mapping with
the whole interval $[0, 2\pi]$.  Consequently, for any graph $G$, if
$k$ denotes the minimum over the maximum clique number of all circular
arc supergraphs $G'$ of $G$, then the interval width of $G$ is at most
$2k$.

\subparagraph{Treewidth and Pathwidth.}

A \emph{tree decomposition} of a graph $G$ is a tree $T$ where each
tree node represents a subset of the vertices of $G$ called
\emph{bag}, and the following requirements have to be satisfied: Each
vertex in $G$ is contained in at least one bag, all bags containing a
given vertex in $G$ form a connected subtree of $T$, and for each edge
in $G$, there exists a bag containing both endpoints.  The
\emph{width} of a tree decomposition is the size of its largest bag
minus one.  The \emph{treewidth} of $G$ is the minimum width over all
tree decompositions of $G$.  The \emph{path decomposition} of a graph
is defined analogously to the tree decomposition, with the constraint
that the tree has to be a path.  Additionally, as for the treewidth,
the \emph{pathwidth} of a graph $G$, denoted by $\pw(G)$, is the
minimum width over all path decompositions of $G$.  Clearly the
pathwidth is an upper bound on the treewidth.  It is known that for
any graph $G$ and any $k \ge 0$, the interval width of $G$ is at most
$k + 1$ if and only if its pathwidth is at most $k$~\cite[Theorem
7.14]{cfk-pa-15}.  Consequently, if $k'$ is the maximum clique size of
a circular arc supergraph of $G$, then $2k' - 1$ is an upper bound on
the pathwidth of $G$.

\subparagraph{Probabilities.}

Since we are analyzing a random graph model, our results are of
probabilistic nature.  To obtain meaningful statements, we show that
they hold \emph{with high probability} (for short \emph{whp.}), i.e.,
with probability $1 - \mathcal{O}(n^{-1})$.  The following Chernoff
bound is a useful tool for showing that certain events occur with high
probability.

\begin{theorem}[{Chernoff Bound~\cite[A.1]{dp-cmara-12}}]
  \label{thm:chernoff}
  Let $X_1, \dots, X_n$ be independent random variables with $X_i \in
  \{0, 1\}$ and let $X$ be their sum.  Let $f(n) = \Omega(\log(n))$.
  If $f(n)$ is an upper bound for $\mathbb{E}[X]$, then for each
  constant $c$ there exists a constant $c'$ such that $X \le c'f(n)$
  holds with probability $1 - \mathcal{O}(n^{-c})$.
\end{theorem}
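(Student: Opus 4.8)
The plan is to derive the statement from the exponential-moment (Chernoff) method, so that the only real work is choosing the constant $c'$ correctly. Write $p_i = \Pr[X_i = 1]$ and $\mu = \mathbb{E}[X] = \sum_i p_i$, and recall that by hypothesis $\mu \le f(n)$. For any $s > 0$, Markov's inequality applied to the monotone transform $e^{sX}$, together with independence of the $X_i$, gives
\begin{align}
  \Pr[X \ge t] \le e^{-st}\, \mathbb{E}[e^{sX}] = e^{-st} \prod_{i=1}^{n} \mathbb{E}[e^{sX_i}] = e^{-st}\prod_{i=1}^{n}\bigl(1 + p_i(e^s - 1)\bigr). \notag
\end{align}
Using $1 + x \le e^x$ on each factor and summing exponents yields $\Pr[X \ge t] \le e^{-st}\exp\bigl(\mu(e^s - 1)\bigr)$. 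Since $e^s - 1 \ge 0$ for $s \ge 0$ and $\mu \le f(n)$, the mean may be replaced by its upper bound, giving the key inequality $\Pr[X \ge t] \le e^{-st}\exp\bigl(f(n)(e^s - 1)\bigr)$, which no longer refers to $\mu$ directly.

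With this in hand the second step is pure constant-chasing. I would fix the free parameter at $s = 1$ (optimizing over $s$ only sharpens constants, not the asymptotics) to obtain $\Pr[X \ge t] \le \exp\bigl(-t + (e-1)f(n)\bigr)$. Setting the threshold to $t = c' f(n)$ turns this into $\Pr[X \ge c'f(n)] \le \exp\bigl(-(c' - (e-1))\,f(n)\bigr)$. Now the assumption $f(n) = \Omega(\log n)$ enters: there is a constant $a > 0$ and an $n_0$ with $f(n) \ge a\ln n$ for all $n \ge n_0$, so the right-hand side is at most $\exp\bigl(-(c' - (e-1))\,a\ln n\bigr) = n^{-(c' - (e-1))a}$. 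Choosing $c' = (e-1) + c/a$ (rounded up to be safe) makes the exponent at least $c$, hence $\Pr[X \ge c'f(n)] = \mathcal{O}(n^{-c})$, and the finitely many values $n < n_0$ are absorbed into the hidden constant.

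I do not expect a genuine obstacle here; the statement is a standard concentration bound and the work is bookkeeping. The one point that needs care is that $c'$ must come out as an honest constant \emph{independent of $n$}, and this is exactly what the hypothesis $f(n) = \Omega(\log n)$ buys: it is the only place the logarithmic growth is used, converting the exponentially small tail $e^{-\Theta(f(n))}$ into a polynomially small $n^{-c}$, and it reduces the requirement on $c'$ to the $n$-free inequality $(c' - (e-1))\,a \ge c$. It is worth noting what is \emph{not} needed: no lower bound on $\mathbb{E}[X]$ is assumed, and the argument uses only $f(n) \ge \mathbb{E}[X]$ together with the monotonicity step above, so the bound is genuinely one-sided (an upper tail). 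The independence of the $X_i$ is used once, to factor the moment-generating function, and the restriction $X_i \in \{0,1\}$ is what makes $\mathbb{E}[e^{sX_i}] = 1 + p_i(e^s-1)$ exact.
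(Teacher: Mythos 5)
Your proof is correct. Note that the paper does not prove this statement at all --- it imports it verbatim from Dubhashi and Panconesi~\cite[A.1]{dp-cmara-12} as a black-box tool --- so there is no in-paper argument to deviate from; your derivation is the canonical one that underlies the cited result. The steps all check out: Markov applied to $e^{sX}$, factorization of the moment generating function via independence, the exact computation $\mathbb{E}[e^{sX_i}] = 1 + p_i(e^s-1)$ from the $\{0,1\}$-valued hypothesis, the bound $1+x \le e^x$, and the monotone replacement of $\mu$ by $f(n)$ (valid since $e^s - 1 \ge 0$). Your one genuine simplification relative to the textbook treatment is fixing $s = 1$ rather than optimizing over $s$; this costs only constant factors, which is exactly what the theorem's ``there exists a constant $c'$'' formulation tolerates, and it makes the constant-chasing transparent: $c' = (e-1) + c/a$ with $f(n) \ge a \ln n$ gives tail probability $n^{-c}$ for $n \ge n_0$, and the finitely many smaller $n$ are absorbed by the $\mathcal{O}(\cdot)$. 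Your closing observations are also accurate: only the upper bound $\mathbb{E}[X] \le f(n)$ is used (the bound is one-sided), and $f(n) = \Omega(\log n)$ is precisely what converts the exponentially small tail $e^{-\Theta(f(n))}$ into a polynomial $n^{-c}$ with an $n$-independent $c'$, which is how the paper uses the theorem (e.g.\ in Theorem~\ref{thm:maximum-clique-number}, where a union bound over $2n$ arc endings then consumes one factor of $n$).
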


\section{Vertex Cover on Hyperbolic Random Graphs}

\begin{figure}
  \centering
  \includegraphics{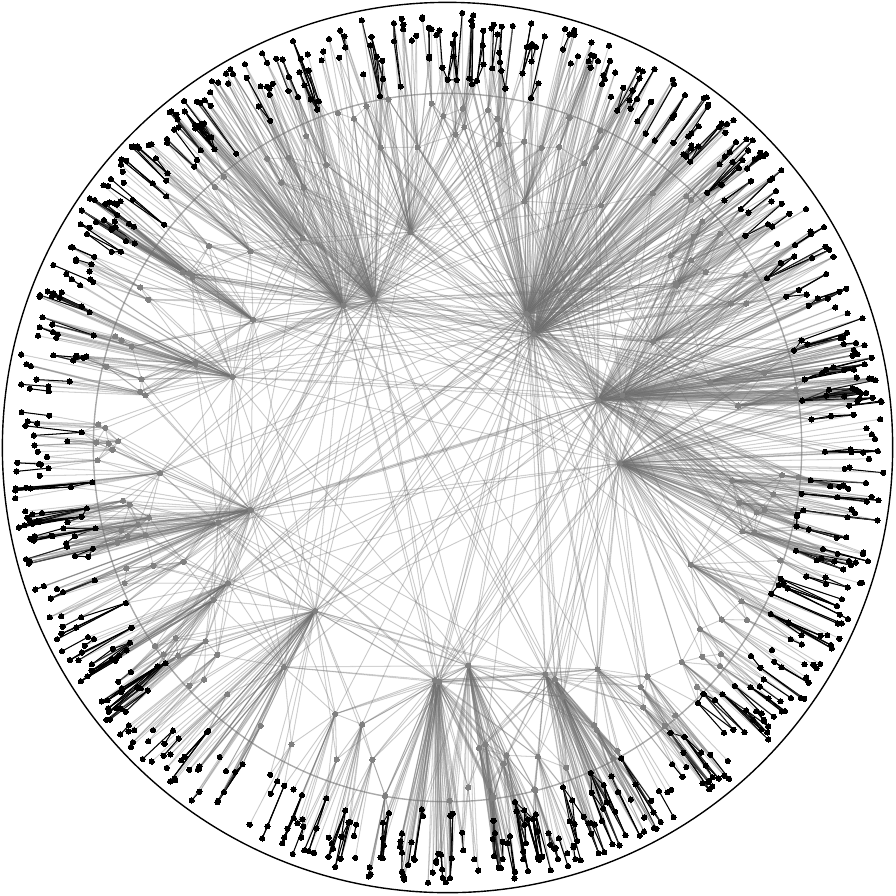}
  \caption{A hyperbolic random graph with $979$ nodes, average degree
    $8.3$, and a power-law exponent of $2.5$. With high probability,
    the gray vertices and edges are removed by the dominance reduction
    rule. Additionally, the remaining subgraph in the outer band
    (consisting of the black vertices and edges) has a small path
    width, with high probability.}
  \label{fig:proof-idea}
\end{figure}

Reduction rules are often applied as a preprocessing step, before
using a brute force search or branching in a search tree.  They
simplify the input by removing parts that are easy to solve.  For
example, an isolated vertex does not cover any edges and can thus
never be part of a minimum vertex cover.  Consequently, in a
preprocessing step all isolated vertices can be removed, which leads
to a reduced input size without impeding the search for a minimum.

The dominance reduction rule was previously defined for the
\textsc{IndependentSet} problem~\cite{fgk-mcaaea-09}, and later used
for \textsc{VertexCover} in the experiments by Akiba and
Iwata~\cite{ai-bf-16}.  Formally, vertex $u$ \emph{dominates} a
neighbor $v \in N(u)$ if $(N(v) \setminus \{u\}) \subseteq N(u)$,
i.e., all neighbors of $v$ are also neighbors of $u$.  We say $u$ is
\emph{dominant} if it dominates at least one vertex.  The dominance
rule states that $u$ can be added to the vertex cover (and afterwards
removed from the graph), without impeding the search for a minimum
vertex cover.  To see that this is correct, assume that $u$
dominates~$v$ and let $S$ be a minimum vertex cover that does not
contain $u$.  Since $S$ has to cover all edges, it contains all
neighbors of~$u$.  These neighbors include $v$ and all of $v$'s
neighbors, since $u$ dominates $v$.  Therefore, removing $v$ from $S$
leaves only the edge $\{u, v\}$ uncovered which can be fixed by adding
$u$ instead.  The resulting vertex cover has the same size as $S$.
When searching for a minimum vertex cover of $G$, it is thus safe to
assume that $u$ is part of the solution and to reduce the search to
$G[V \setminus \{u\}]$.

In the remainder of this section, we study the effectiveness of the
dominance reduction rule on hyperbolic random graphs and conclude that
\textsc{VertexCover} can be solved efficiently on these graphs.  Our
results are summarized in the following main theorem.

\wormhole{thm-vertex-cover-poly}
\begin{theorem}
  \label{thm:vertex-cover-poly}
  Let $G$ be a hyperbolic random graph on $n$ vertices.  Then the
  \textsc{VertexCover} problem on $G$ can be solved in $\poly(n)$
  time, with high probability.
\end{theorem}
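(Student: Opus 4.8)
The plan is to show that applying the dominance reduction rule once leaves behind a graph of small pathwidth, on which VertexCover can be solved by dynamic programming in time $\poly(n)$. I would organize the argument around the picture in Figure~\ref{fig:proof-idea}: the disk $B_O(R)$ splits into an \emph{inner} part containing high-degree vertices and an \emph{outer band} containing low-degree vertices, say the set of vertices with $r(v) \ge R/2$ or with $\deg(v) \le d$ for an appropriate threshold $d = \poly\log(n)$. The two key claims are: (i) with high probability every vertex in the inner part is dominated (and hence removed) by the dominance rule, so that the reduced graph is contained in the outer band, and (ii) the outer band induces a subgraph of pathwidth $\poly\log(n)$, whp. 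Granting these, the remaining graph $G_{\le d}$ has logarithmic pathwidth, and the standard $2^{\pw} \poly(n)$ dynamic program for VertexCover runs in $\poly(n)$ time; combined with the soundness of the dominance rule (already argued above) this yields the theorem.

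\emph{Step (i): the inner vertices are dominated.} I would fix a vertex $u$ of sufficiently small radius (equivalently large expected degree) and show it dominates some neighbor whp. The natural candidate for a dominated vertex $v$ is a vertex of large radius lying at almost the same angle as $u$: such a $v$ has its entire connection region, described by the angular halfwidth $\theta(r(v), \cdot)$ from Equation~\eqref{eq:maximum-angular-distance}, contained in the (wider) connection region of $u$. Using $\theta(r(u),r(v)) = 2e^{(R-r(u)-r(v))/2}(1+o(1))$ one sees that whenever $r(u)$ is small, the angular reach of $u$ dominates that of any higher-radius neighbor in a window around $\varphi(u)$; I would then bound the probability that no such $v$ exists by computing, via the measure $\mu$ and the density in Equation~\eqref{eq:probability-density-function}, the expected number of candidate vertices in that angular window with large enough radius, and apply a first/second-moment or Chernoff argument (Theorem~\ref{thm:chernoff}) to make it occur whp, taking a union bound over all inner vertices.

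\emph{Step (ii): the outer band has small pathwidth.} Here I would exploit the circular-arc structure made explicit in the preliminaries. Each outer vertex $v$ is naturally associated with the angular arc $[\varphi(v)-\theta_v,\ \varphi(v)+\theta_v]$ of angular radius $\theta_v = \theta(r(v), R)$ within which its neighbors must lie; assigning to $v$ a slightly inflated arc yields a circular-arc supergraph of the induced subgraph on the band. By the preliminaries, if the maximum clique of this supergraph is $k'$, then $\pw \le 2k'-1$, so it suffices to bound the number of outer arcs covering any fixed angle. That number is controlled by the total angular width contributed by low-degree vertices, which I would bound in expectation using $\mu$ and the exponentially small $\theta_v$ for large-radius vertices, and then convert to a whp.\ bound on the maximum overlap via Theorem~\ref{thm:chernoff} and a union bound over a fine net of angles.

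I expect the technical heart to be Step~(i): the dominance condition $(N(v)\setminus\{u\})\subseteq N(u)$ is a containment of two geometrically-defined neighborhoods, and establishing it requires not merely that $u$ and $v$ are close in angle, but that $u$'s disk swallows $v$'s disk on both angular sides simultaneously while also accounting for the few possible neighbors of $v$ that lie at smaller radius than $v$. Handling the boundary of the connection region precisely — the $1+\Theta(\cdot)$ error terms in Equations~\eqref{eq:probability-density-function} and~\eqref{eq:maximum-angular-distance}, and the contribution of very-low-radius vertices that might be neighbors of $v$ but not of $u$ — is where the calculation will be most delicate, and choosing the radius/degree threshold $d$ so that both (i) and (ii) hold simultaneously whp.\ is the main balancing act of the proof.
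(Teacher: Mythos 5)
Your overall route is exactly the paper's: clear the inner disk with one round of the dominance rule, show the outer band has small pathwidth via a circular-arc supergraph whose maximum clique is bounded by Chernoff plus a union bound, then run the standard dynamic program on a tree decomposition. Step~(i) as you sketch it is also the paper's argument (there, the containment of $B_v(R) \cap B_O(R)$ in $B_u(R) \cap B_O(R)$ is reduced to comparing the boundary intersection points $i_u, i_v$ at angles $\theta(r(u), R)$ and $\theta(r(v), R)$, which incidentally disposes of your worry about low-radius neighbors of $v$ and about covering both angular sides simultaneously: the region containment handles all potential neighbors of $v$ at once, and the critical angular offset is simply $\theta(r(u),R) - \theta(r(v),R)$).

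There is, however, one concrete error in Step~(ii): the arc halfwidth $\theta_v = \theta(r(v), R)$ is too small, and your justification that all neighbors of $v$ lie within this angular window is false. By Equation~\eqref{eq:maximum-angular-distance}, a neighbor $w$ of $v$ with $r(w) \ge r(v)$ may sit at angular distance up to $\theta(r(v), r(v)) \approx 2e^{(R - 2r(v))/2}$, which exceeds $\theta(r(v), R) \approx 2e^{-r(v)/2}$ by the factor $e^{(R - r(v))/2}$; at the inner edge of the band, $r(v) = \rho = R - 2\log\log(n^c)$, this factor is $c\log n$, so the needed inflation is not slight. Concretely, two adjacent vertices both at radius $r$ with angular distance $\theta(r,r)$ have non-intersecting arcs under your definition whenever $2e^{(R-2r)/2} > 4e^{-r/2}$, i.e.\ whenever $R - r > 2\ln 2$ --- which holds throughout most of the outer band --- so the claimed supergraph property fails and the clique bound would not bound the pathwidth of the band. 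The fix is local and is what the paper does: take $I_v = [\varphi(v) - \theta(r(v), r(v)),\ \varphi(v) + \theta(r(v), r(v))]$; then for adjacent $u, v$ with $r(u) \le r(v)$ one has $\Delta_\varphi(u,v) \le \theta(r(u), r(v)) \le \theta(r(u), r(u))$, so $\varphi(v) \in I_u$, and the expected-overlap computation with $\theta(x,x)$ in place of $\theta(x,R)$ still yields an $\mathcal{O}(\log n)$ clique bound in the band, whp. (A smaller point: your ``fine net of angles'' for the union bound needs an accompanying argument that the overlap count does not change between net points; the clean discretization is the set of $2n$ arc endpoints, where all changes occur.) With these repairs your proposal coincides with the paper's proof.
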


The proof of Theorem~\ref{thm:vertex-cover-poly} consists of two parts
that make use of the underlying hyperbolic geometry.  In the first
part, we show that applying the dominance reduction rule once removes
all vertices in the inner part of the hyperbolic disk with high
probability, as depicted in Figure~\ref{fig:proof-idea}.  We note that
this is independent of the order in which the reduction rule is
applied, as dominant vertices remain dominant after removing other
dominant vertices.  In the second part, we consider the induced
subgraph containing the remaining vertices near the boundary of the
disk (black vertices in Figure~\ref{fig:proof-idea}).  We prove that
this subgraph has a small pathwidth, by showing that there is a
circular arc supergraph with a small interval width.  Consequently, a
tree decomposition of this subgraph can be computed efficiently.
Finally, we obtain a polynomial time algorithm for
\textsc{VertexCover} by first applying the reduction rules and
afterwards solving \textsc{VertexCover} on the remaining subgraph
using dynamic programming on the tree decomposition of small width.

\subsection{Dominance on Hyperbolic Random Graphs}
\label{sec:hyperbolic_domination}

\begin{figure}
  \centering
   \includegraphics{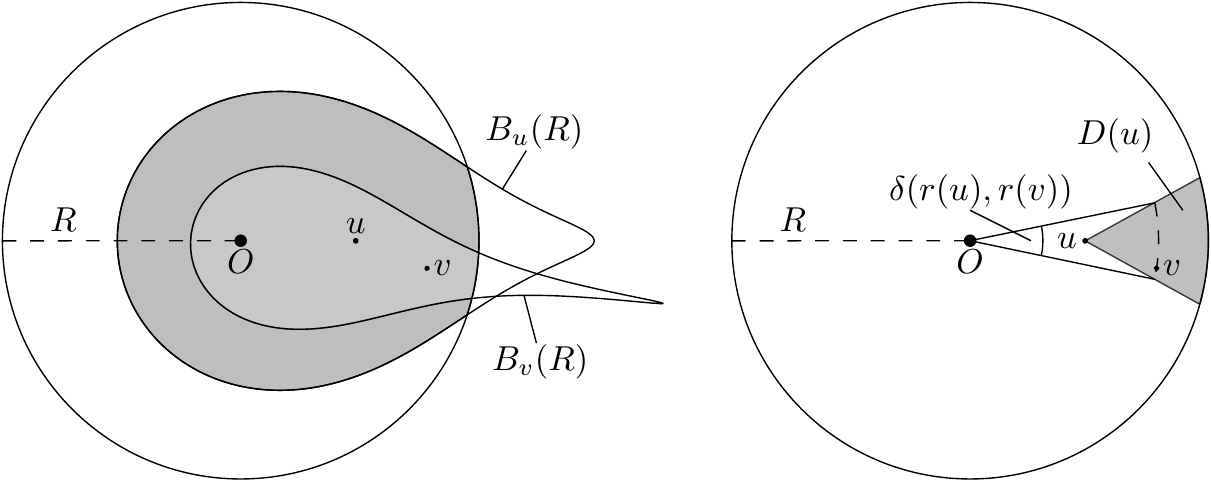}
   \caption{Left: Vertex $u$ dominates vertex $v$, as $B_v(R) \cap
     B_O(R)$ (light gray) is completely contained in $B_u(R) \cap
     B_O(R)$ (gray).  Right: All vertices that lie in $D(u)$ are
     dominated by $u$.}
  \label{fig:domination-probability}
\end{figure}

Recall that a hyperbolic random graph is obtained by distributing $n$
vertices in a hyperbolic disk $B_O(R)$ and that any two are connected
if their distance is at most $R$.  Consequently, one can imagine the
neighborhood of a vertex $u$ as another disk $B_u(R)$.  Vertex $u$
dominates another vertex $v$ if its neighborhood disk completely
contains that of $v$ (both constrained to $B_O(R)$), as depicted in
Figure~\ref{fig:domination-probability}~left.  We define the
\emph{dominance area} $D(u)$ of $u$ to be the area containing all such
vertices $v$.  That is, $D(u) = \{ p \in B_O(R) \mid B_p(R) \cap
B_O(R) \subseteq B_u(R) \cap B_O(R) \}$.  The result is illustrated in
Figure~\ref{fig:domination-probability}~right.  We note that it is
sufficient for a vertex $v$ to lie in $D(u)$ in order to be dominated
by $u$, however, it is not necessary.

Given the radius $r(u)$ of vertex $u$ we can now compute a lower bound
on the probability that $u$ dominates another vertex, i.e., the
probability that at least one vertex lies in $D(u)$, by determining
the measure $\mu(D(u))$.  To this end, we first define $\delta(r(u),
r(v))$ to be the maximum angular distance between two nodes $u$ and
$v$ such that $v$ lies in $D(u)$.

\begin{lemma}
  \label{lem:domination-angle}
  Let $u, v$ be vertices with $r(u) \le r(v)$.  Then, $v \in D(u)$ if
  $\Delta_\varphi(u, v)$ is at most
  \begin{displaymath}
    \delta(r(u), r(v)) = 2(e^{-r(u) / 2} - e^{-r(v) / 2}) + \Theta(e^{-3/2 r(u)}) - \Theta(e^{-3/2 r(v)}).
  \end{displaymath}
\end{lemma}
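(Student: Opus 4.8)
The plan is to derive the maximum angular distance $\delta(r(u), r(v))$ at which $v$ still lies in the dominance area $D(u)$ by working out the geometric condition $B_v(R) \cap B_O(R) \subseteq B_u(R) \cap B_O(R)$ explicitly. The key observation is that since both neighborhood disks are intersected with $B_O(R)$, the containment is governed by the boundaries of these disks as they cross the boundary circle $\partial B_O(R)$. The critical configuration is when the two disk boundaries meet exactly on $\partial B_O(R)$: as long as $v$ is close enough to $u$ angularly, the portion of $B_v(R)$ inside $B_O(R)$ stays within $B_u(R)$, and the binding constraint is where $\partial B_v(R)$ first pokes outside $\partial B_u(R)$ near the boundary circle.

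First I would identify the two extreme points on $\partial B_O(R)$ that are reachable from $u$ and from $v$ respectively. A point $p$ on the boundary circle (so $r(p) = R$) lies in $B_u(R)$ exactly when its angular distance to $u$ is at most $\theta(r(u), R)$, using the maximum-angular-distance formula~\eqref{eq:maximum-angular-distance}, and similarly for $v$. For the neighborhood of $v$ (restricted to $B_O(R)$) to be contained in that of $u$, the angular interval on the boundary reachable from $v$ must be contained in the one reachable from $u$. Assuming without loss of generality that $v$ sits at angular distance $\Delta_\varphi(u,v)$ from $u$ on one side, the tightest containment requirement comes from the far endpoint: we need $\Delta_\varphi(u,v) + \theta(r(v), R) \le \theta(r(u), R)$. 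Rearranging gives the threshold $\delta(r(u), r(v)) = \theta(r(u), R) - \theta(r(v), R)$.

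The remaining work is then a pure asymptotic calculation. I would substitute $r(v) = R$ into~\eqref{eq:maximum-angular-distance} to get $\theta(r(u), R) = 2 e^{-r(u)/2}(1 + \Theta(e^{-r(u)}))$, and likewise $\theta(r(v), R) = 2 e^{-r(v)/2}(1 + \Theta(e^{-r(v)}))$. Taking the difference yields
\begin{align}
  \delta(r(u), r(v)) = 2 e^{-r(u)/2} - 2 e^{-r(v)/2} + \Theta(e^{-3r(u)/2}) - \Theta(e^{-3r(v)/2}), \notag
\end{align}
which is exactly the claimed expression. I would also need to verify that this boundary-crossing constraint is indeed the binding one, i.e., that containment on the boundary circle implies full containment of $B_v(R) \cap B_O(R)$ in $B_u(R) \cap B_O(R)$; this should follow from $r(u) \le r(v)$, since the inner vertex $u$ reaches at least as far in every direction.

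The main obstacle I anticipate is justifying rigorously that the containment of the two restricted neighborhood disks reduces to the single one-dimensional condition on the boundary circle. Intuitively, because $u$ is closer to the origin than $v$, its neighborhood disk is "wider" and the only place where $B_v(R)$ could escape $B_u(R)$ within $B_O(R)$ is near the outer boundary, where both disks are clipped. Making this precise — ruling out that $\partial B_v(R)$ exits through the interior of $B_u(R)$ away from $\partial B_O(R)$ — requires a careful geometric argument about how the two circles intersect, and it is here that the condition $r(u) \le r(v)$ is essential. Once that reduction is established, the rest is the routine substitution into~\eqref{eq:maximum-angular-distance} sketched above.
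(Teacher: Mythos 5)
Your proposal is correct and takes essentially the same route as the paper: both arguments reduce the containment $B_v(R) \cap B_O(R) \subseteq B_u(R) \cap B_O(R)$ to the intersection points of the neighborhood-disk boundaries with $\partial B_O(R)$, conclude $\delta(r(u), r(v)) = \theta(r(u), R) - \theta(r(v), R)$, and expand via Equation~\eqref{eq:maximum-angular-distance} with $r = R$ to get the claimed asymptotics. The one-dimensional boundary reduction you rightly flag as the delicate step is handled in the paper's proof only by an ``it is easy to see'' appeal, so your sketch is faithful to (and no less rigorous than) the original.
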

\begin{proof}
  Without loss of generality we assume that $\varphi(u) = 0$.  For now
  assume that $\varphi(v) = \varphi(u)$.  Since $r(v) \ge r(u)$ we
  know that the intersections of the boundaries of $B_v(R)$ with
  $B_O(R)$ lie between those of $B_u(R)$ with $B_O(R)$, as is depicted
  in Figure~\ref{fig:domination-angle}.  Now let $i_u$ denote one of
  these intersections for $B_u(R)$ and $B_O(R)$, and let $i_v$ denote
  the intersection for $B_v(R)$ and $B_O(R)$ that is on the same side
  of the ray through $O$ and $u$ as $i_u$.  It is easy to see that the
  maximum angular distance between $u$ and $v$ such that $B_v(R) \cap
  B_O(R)$ is contained within $B_u(R) \cap B_O(R)$ is given by the
  angular distance between $i_u$ and $i_v$.  Therefore, $v$ lies in
  the dominance area of $u$ if $\Delta_\varphi(u, v) \le
  \Delta_\varphi(i_u, i_v)$.

  Recall that $\theta(r(p), r(q))$ denotes the maximum angular
  distance such that $\dist(p, q) \le R$, as defined in
  Equation~\eqref{eq:maximum-angular-distance}.  Since $i_u$ and $i_v$
  have radius $R$ and hyperbolic distance $R$ to $u$ and $v$,
  respectively, we know that their angular coordinates are
  $\theta(r(u), R)$ and $\theta(r(v), R)$, respectively.
  Consequently, the angular distance between $i_u$ and $i_v$ is given
  by
  \begin{align}
    \delta(r(u), r(v)) &= \theta(r(u), R) - \theta(r(v), R) \notag \\
                       &= 2(e^{-r(u) / 2} - e^{-r(v) / 2}) + \Theta(e^{-3/2 r(u)}) - \Theta(e^{-3/2 r(v)}) \notag. \qedhere
  \end{align}
\end{proof}

\begin{figure}
  \centering
  \includegraphics{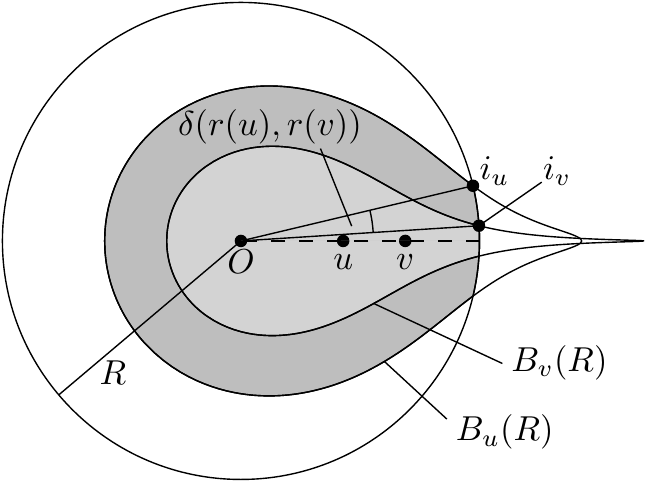}
  \caption{Vertex $u$ dominates vertex $v$, with $r(u) \le r(v)$, if
    $\Delta_\varphi(u, v) \le \Delta_\varphi(i_u, i_v)$.}
  \label{fig:domination-angle}
\end{figure}
  
Using Lemma~\ref{lem:domination-angle} we can now compute the
probability for a given vertex to lie in the dominance area of $u$.
We note that this probability grows roughly like $2/\pi e^{-r(u)/2}$,
which is a constant fraction of the measure of the neighborhood disk
of $u$ which grows as $\alpha / (\alpha - 1/2) \cdot 2 / \pi
e^{-r(u)/2}$~\cite[Lemma 3.2]{gpp-rhg-12}.  Consequently, the expected
number of nodes that $u$ dominates is a constant fraction of the
expected number of its neighbors.

\begin{lemma}
  \label{lem:domination-probability}
  Let $u$ be a node with radius $r(u) \ge R/2$.  The probability for a
  given node to lie in $D(u)$ is given by
  \begin{align}
    \mu(D(u)) &= \frac{2}{\pi} e^{-r(u) / 2} (1 - \Theta(e^{-\alpha(R - r(u))})) \pm \mathcal{O}(1/n) \notag. 
  \end{align}
\end{lemma}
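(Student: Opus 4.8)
The plan is to compute $\mu(D(u))$ directly as an integral of the density $f$ over the dominance area, whose shape is already pinned down by Lemma~\ref{lem:domination-angle}. First I would observe that a point $v$ can be dominated by $u$ only when $r(v) \ge r(u)$ (a point closer to the origin has a neighborhood disk that reaches a wider angular range at the boundary and hence cannot be contained in that of $u$), so $D(u)$ lives entirely in the radial band $[r(u), R]$. By Lemma~\ref{lem:domination-angle}, at each radius $r$ in this band the admissible angular coordinates relative to $\varphi(u)$ form a symmetric interval of half-width $\delta(r(u), r)$. Since the angular coordinate is uniform and independent of the radius, this collapses the measure into the one-dimensional integral
\begin{align}
  \mu(D(u)) = \int_{r(u)}^{R} 2\,\delta(r(u), r)\, f(r)\, \mathrm{d}r, \notag
\end{align}
and the whole task reduces to substituting the known asymptotic expansions and evaluating an elementary integral.

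Next I would plug in the expansion of $\delta(r(u), r)$ from Lemma~\ref{lem:domination-angle} together with $f(r) = \frac{\alpha}{2\pi}e^{-\alpha(R - r)}(1 + \Theta(e^{-\alpha R} - e^{-2\alpha r}))$ from Equation~\eqref{eq:probability-density-function}, keeping only the leading angular term $2(e^{-r(u)/2} - e^{-r/2})$ and the leading density first. The integrand then becomes $\frac{2\alpha}{\pi}(e^{-r(u)/2} - e^{-r/2})e^{-\alpha(R-r)}$, which splits into two elementary pieces. The $e^{-r(u)/2}$ part integrates to $\frac{1}{\alpha}e^{-r(u)/2}(1 - e^{-\alpha(R - r(u))})$, and after the prefactor $\frac{2\alpha}{\pi}$ this yields the claimed leading term $\frac{2}{\pi}e^{-r(u)/2}$ together with a relative correction of order $e^{-\alpha(R-r(u))}$. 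The $e^{-r/2}$ part integrates to something proportional to $\frac{1}{\alpha - 1/2}(e^{-R/2} - e^{-\alpha(R - r(u))}e^{-r(u)/2})$, which is exactly where the two advertised error terms surface: the $e^{-\alpha(R-r(u))}e^{-r(u)/2}$ contribution is again a relative correction of order $e^{-\alpha(R - r(u))}$ to the leading term, while the boundary term $e^{-R/2}$ equals $\Theta(1/n)$ since $R = 2\log(8n/(\pi\bar{\kappa}))$, producing the additive $\pm\mathcal{O}(1/n)$.

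It then remains to verify that everything I discarded is genuinely of lower order. The higher-order pieces $\Theta(e^{-3r(u)/2})$ and $\Theta(e^{-3r/2})$ of $\delta$, integrated against the density, contribute at most $\Theta(e^{-3r(u)/2})$, and the multiplicative density correction $\Theta(e^{-\alpha R} - e^{-2\alpha r})$ contributes only a relative error of order $e^{-\alpha R}$. Here the hypothesis $r(u) \ge R/2$ does the essential work: it forces $e^{-\alpha(R - r(u))} \ge e^{-\alpha R/2}$, so the corrections already isolated dominate these leftover terms, and it keeps the angular widths small enough that the linearizations of $\theta$ from Equation~\eqref{eq:maximum-angular-distance} remain valid with no angular wrap-around. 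I expect the main obstacle to be precisely this error bookkeeping — cleanly separating the multiplicative $\Theta(e^{-\alpha(R - r(u))})$ correction from the additive $\mathcal{O}(1/n)$ term, and confirming that all remaining contributions fall below these two thresholds — rather than the elementary integration itself.
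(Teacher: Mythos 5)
Your proposal is correct and follows essentially the same route as the paper's proof: integrate the density over the angular strip of half-width $\delta(r(u), r)$ given by Lemma~\ref{lem:domination-angle}, split into the two elementary integrals, and control the discarded terms via $r(u) \ge R/2$ and $e^{-R/2} = \Theta(n^{-1})$. One bookkeeping caution on the step you yourself flag as the crux: the cross term $e^{-\alpha(R - r(u))}e^{-r(u)/2}$ from the $e^{-r/2}$ piece enters with a \emph{positive} sign and coefficient $\frac{\alpha}{\alpha - 1/2} > 1$, so if you keep it as a relative correction the bracket comes out as $1 + \Theta(e^{-\alpha(R - r(u))})$ rather than the claimed $1 - \Theta(e^{-\alpha(R - r(u))})$; the fix, which is what the paper does, is to observe that it equals $e^{-R/2}e^{-(\alpha - 1/2)(R - r(u))} \le e^{-R/2}$ and charge it (together with the $e^{-R/2}$ boundary term) to the additive $\pm\mathcal{O}(1/n)$.
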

\begin{proof}
  The probability for a given vertex $v$ to lie in $D(u)$ is obtained
  by integrating the probability density (given by
  Equation~\eqref{eq:probability-density-function}) over $D(u)$.
  \begin{align}
    \mu(D(u)) &= 2 \int_{r(u)}^R \int_{0}^{\delta(r(u), r)} f(r) \, \mathrm{d}\varphi \, \mathrm{d}r \notag \\
              &= 2 \int_{r(u)}^R \left( 2(e^{-r(u) / 2} - e^{-r / 2}) + \Theta(e^{-3/2 r(u)}) - \Theta(e^{-3/2 r}) \right) \notag \\
              &\hphantom{= 2 \int_{r(u)}^R} \cdot \frac{\alpha}{2\pi} e^{-\alpha(R - r)} (1 + \Theta(e^{-\alpha R} - e^{-2\alpha r})) \, \mathrm{d}r \notag 
  \end{align}
  Since $r(u) \ge R/2$ and $r \in [r(u), R]$ we have $\Theta(e^{-3/2
    r(u)}) - \Theta(e^{-3/2 r}) = \pm \mathcal{O}(e^{-3/4 R})$ and $(1
  + \Theta(e^{-\alpha R} - e^{-2 \alpha r})) = (1 + \Theta(e^{- \alpha
    R}))$.  Due to the linearity of integration, constant factors
  within the integrand can be moved out of the integral, which yields
  \begin{align}
    \mu(D(u)) &= \frac{\alpha}{\pi} e^{-\alpha R} (1 + \Theta(e^{-\alpha R})) \int_{r(u)}^R \left( 2(e^{-r(u) / 2} - e^{-r / 2}) \pm \mathcal{O}(e^{-3/4 R}) \right) \cdot e^{\alpha r} \, \mathrm{d}r \notag \\
              &= \frac{2 \alpha}{\pi} e^{-r(u) / 2} e^{-\alpha R} (1 + \Theta(e^{-\alpha R})) \int_{r(u)}^R e^{\alpha r} \mathrm{d}r \notag \\
              &\quad- \frac{2 \alpha}{\pi} e^{-\alpha R} (1 + \Theta(e^{-\alpha R})) \int_{r(u)}^R e^{(\alpha - 1/2)r} \mathrm{d}r \pm \mathcal{O} \left(e^{-(3/4 + \alpha) R} \int_{r(u)}^R e^{\alpha r} \mathrm{d}r \right). \notag
  \end{align}
  The remaining integrals can be computed easily and we obtain
  \begin{align}
    \mu(D(u)) &= \frac{2}{\pi} e^{-r(u) / 2}  (1 + \Theta(e^{-\alpha R})) (1 - e^{-\alpha (R - r(u))}) \notag \\
              &\quad- \frac{2 \alpha}{(\alpha - 1/2)\pi} e^{- R/2} (1 + \Theta(e^{-\alpha R})) (1 - e^{-(\alpha - 1/2)(R - r(u))}) \notag \\
              &\quad \pm \mathcal{O} \left(e^{-3/4 R} (1 - e^{-\alpha(R - r(u))}) \right). \notag
  \end{align}
  As $e^{-R/2} = \Theta(n^{-1})$ and $e^{-3/4 R} = \Theta(n^{-3/2})$,
  simplifying the error terms yields the claim.
\end{proof}

The following lemma shows that, with high probability, all vertices
that are not too close to the boundary of the disk dominate at least
one vertex.

\begin{lemma}
  \label{lem:whp-domination-near-center}
  Let $G$ be a hyperbolic random graph with average degree
  $\bar{\kappa}$.  Then there is a constant $c > 4/\bar{\kappa}$, such
  that all vertices $u$ with $r(u) \le \rho = R - 2\log\log(n^c)$ are
  dominant, with high probability.
\end{lemma}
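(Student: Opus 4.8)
The plan is to bound the probability that a single vertex $u$ fails to be dominant and then take a union bound over all vertices with $r(u) \le \rho$. Since a vertex lying in the dominance area $D(u)$ is a \emph{sufficient} condition for being dominated by $u$, the event that $u$ is not dominant is contained in the event that none of the other $n-1$ vertices falls into $D(u)$. As the vertex positions are independent, this latter probability is exactly $(1 - \mu(D(u)))^{n-1}$, so the whole argument reduces to showing that $\mu(D(u))$ stays of order $\omega(\log(n)/n)$ throughout the range $r(u) \le \rho$.

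First I would locate the worst case within this range. By Lemma~\ref{lem:domination-angle} the dominance angle $\delta(r(u), r(v))$ is decreasing in $r(u)$, so $\mu(D(u))$ only grows as $u$ moves towards the centre; hence the smallest dominance measure over all $r(u) \le \rho$ is attained at the outermost radius $r(u) = \rho$, and this also covers the central vertices with $r(u) < R/2$ (where Lemma~\ref{lem:domination-probability} is not stated but which can only dominate more). Since $2\log\log(n^c) = o(R)$, we have $\rho > R/2$ for large $n$, so Lemma~\ref{lem:domination-probability} applies at $r(u) = \rho$. Substituting $R = 2\log(8n/(\pi\bar\kappa))$ gives $e^{-R/2} = \pi\bar\kappa/(8n)$, and because $e^{-\rho/2} = e^{-R/2} e^{\log\log(n^c)} = e^{-R/2}\, c\log(n)$, Lemma~\ref{lem:domination-probability} yields
\begin{displaymath}
  \mu(D(u)) \ge \frac{2}{\pi} e^{-\rho/2}(1 - o(1)) = \frac{\bar\kappa\, c\, \log(n)}{4n}(1 - o(1)),
\end{displaymath}
where the multiplicative factor $(1 - \Theta(e^{-\alpha(R - r(u))}))$ and the additive $\mathcal{O}(1/n)$ from Lemma~\ref{lem:domination-probability} are absorbed into the $(1-o(1))$ because $\rho$ is close to $R$ while the leading term is of order $\log(n)/n$.

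The expected number of vertices in $D(u)$ is therefore $(n-1)\mu(D(u)) \ge \tfrac{\bar\kappa c}{4}\log(n)(1 - o(1))$, which exceeds $\log(n)$ precisely when $c > 4/\bar\kappa$; this is exactly where the stated constant enters. Using $1 - x \le e^{-x}$, I would then bound
\begin{displaymath}
  \Pr[u \text{ not dominant}] \le (1 - \mu(D(u)))^{n-1} \le e^{-(n-1)\mu(D(u))} \le n^{-\frac{\bar\kappa c}{4}(1 - o(1))}.
\end{displaymath}
Choosing the constant $c$ large enough (still with $c > 4/\bar\kappa$) drives this per-vertex failure probability below $n^{-2}$, so a union bound over the at most $n$ relevant vertices shows that all of them are dominant with probability $1 - \mathcal{O}(n^{-1})$, which is the claimed \emph{whp.} statement.

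The main obstacle I expect is the uniform control of the error terms in the estimate for $\mu(D(u))$: the bound from Lemma~\ref{lem:domination-probability} must retain the factor $\tfrac{\bar\kappa c}{4}\log(n)$ with the correct constant all the way out to $r(u) = \rho$, and it is tempting to lose the crucial $\log(n)$ factor when simplifying. Tracking how $\rho = R - 2\log\log(n^c)$ turns into the factor $c\log(n)$ in $e^{-\rho/2}$, and verifying that both the relative error $\Theta(e^{-\alpha(R - \rho)})$ and the additive $\mathcal{O}(1/n)$ term are genuinely negligible against this factor, is the delicate part; everything else is a first-moment computation followed by the union bound.
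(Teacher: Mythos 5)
Your proposal is correct and follows essentially the same route as the paper's proof: reduce to the worst case $r(u) = \rho$ via monotonicity of the dominance area, apply Lemma~\ref{lem:domination-probability} with $R = 2\log(8n/(\pi\bar\kappa))$ to get $\mu(D(u)) = \Theta\bigl(\tfrac{\bar\kappa c}{4}\log(n)/n\bigr)$, bound the failure probability by $e^{-n\mu(D(u))} = n^{-\frac{\bar\kappa c}{4}(1-o(1))}$, and choose $c$ so this is $\mathcal{O}(n^{-2})$ before a union bound. Your explicit handling of the $r(u) < R/2$ regime and the $(n-1)$ exponent are slightly more careful than the paper's write-up, but they do not change the argument.
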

\begin{proof}
  Vertex $u$ is dominant if at least one vertex lies in $D(u)$.  To
  show this for any $u$ with $r(u) \le \rho$, it suffices to show it
  for $r(u) = \rho$, since $D(u)$ increases with decreasing radius.
  To determine the probability that at least one vertex lies in
  $D(u)$, we use Lemma~\ref{lem:domination-probability} and obtain
  \begin{align}
    \mu(D(u)) &= \frac{2}{\pi} e^{-\rho / 2}(1 - \Theta(e^{-\alpha (R - \rho)})) \pm \mathcal{O}(1/n) \notag \\
              &= \frac{2}{\pi} e^{-R/2 + \log\log(n^c)} (1 - \Theta(e^{-2 \alpha \log\log(n^c)})) \pm \mathcal{O}(1/n). \notag
  \end{align}
  By substituting $R = 2\log(8n/ (\pi \bar{\kappa}))$, we obtain
  $\mu(D(u)) = \bar{\kappa}/(4n) (c \log(n) (1 - o(1)) \pm
  \mathcal{O}(1))$.
  The probability of at least one node falling into $D(u)$ is now
  given by
  \begin{align}
    \Pr[\{ v \in D(u) \} \neq \emptyset] = 1 - (1 - \mu(D(u)))^n \ge 1 - e^{-n \mu(D(u))} = 1 - \Theta(n^{- c \bar{\kappa} / 4 (1 - o(1))}). \notag
  \end{align}
  Consequently, for large enough $n$ we can choose $c >
  4/\bar{\kappa}$ such that the probability of a vertex at radius
  $\rho$ being dominant is at least $1 - \Theta(n^{-2})$, allowing us
  to apply union bound.
\end{proof}

\begin{corollary}
  \label{col:dominance-removes-inner-disk} 
  Let $G$ be a hyperbolic random graph and $c > 4/\bar{\kappa}$.  With
  high probability, all vertices with radius at most $\rho = R -
  2\log\log(n^c)$ are removed by the dominance rule.
\end{corollary}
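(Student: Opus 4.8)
The corollary follows almost immediately from Lemma~\ref{lem:whp-domination-near-center}, but I need to bridge the gap between a vertex being \emph{dominant} and that vertex being \emph{removed} by the dominance rule. Recall from the definition of the rule that it is the dominating vertex $u$ (not the dominated vertex $v$) that gets added to the cover and removed from the graph. So Lemma~\ref{lem:whp-domination-near-center} already gives exactly what I want: whp.\ every vertex $u$ with $r(u) \le \rho$ dominates at least one vertex, hence is dominant, hence qualifies for removal by the rule. The plan is therefore to invoke the lemma and then argue that dominance (and thus removability) is preserved throughout the process of applying the rule repeatedly.

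\medskip

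\noindent\textbf{Plan.} First I would restate the conclusion of Lemma~\ref{lem:whp-domination-near-center}: with high probability every vertex $u$ with $r(u) \le \rho = R - 2\log\log(n^c)$ is dominant. Condition on this high-probability event for the remainder of the argument. Since each such $u$ dominates some neighbor $v$, by the definition of the dominance rule $u$ may be safely added to the vertex cover and deleted from the graph. The only subtlety is that the dominance rule is applied iteratively, and one must check that removing some dominant vertices does not destroy the dominance status of the others before they themselves are processed.

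\medskip

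\noindent\textbf{Key steps.} The main step is the monotonicity observation already flagged in the overview of the proof of Theorem~\ref{thm:vertex-cover-poly}: dominant vertices remain dominant after other dominant vertices are removed. Concretely, suppose $u$ dominates $v$, so $(N(v)\setminus\{u\})\subseteq N(u)$, and suppose some other vertex $w$ is removed by the rule. If $w\neq v$, then $v$ still satisfies $r(v) \ge r(u)$ (or can be re-examined against $u$) and $u$ still contains $v$'s surviving neighborhood, since deleting $w$ only removes vertices from both $N(u)$ and $N(v)$, so the inclusion $(N(v)\setminus\{u\})\subseteq N(u)$ is preserved in the induced subgraph. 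If instead $w=v$ itself is removed first (because $v$ was dominant for yet another vertex), then $u$ may lose its particular dominated vertex; however, since $r(u)\le\rho$, the measure argument of Lemma~\ref{lem:whp-domination-near-center} guarantees whp.\ that $D(u)$ contains at least one vertex, and in fact the expected number of dominated vertices is $\Theta(n\,\mu(D(u))) = \Theta(c\log n)\,\bar{\kappa}/4 = \Theta(\log n)$, which is large; a Chernoff bound via Theorem~\ref{thm:chernoff} shows whp.\ that $u$ dominates $\Omega(\log n)$ vertices simultaneously, so removing any one of them leaves $u$ still dominant. Thus at every stage of the iterative application, every not-yet-removed vertex of radius at most $\rho$ remains dominant and hence removable, and the process does not terminate until all of them are gone.

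\medskip

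\noindent\textbf{Main obstacle.} The conceptually delicate part is the order-independence: ensuring that no vertex of radius $\le\rho$ can be ``stranded'' with all of its dominated vertices removed before it is itself processed. I expect to handle this by strengthening Lemma~\ref{lem:whp-domination-near-center} slightly, showing via Chernoff that each such $u$ dominates not just one but $\Omega(\log n)$ vertices whp., so that a bounded number of deletions cannot exhaust its dominance. Everything else is a direct application of the definition of the rule together with the high-probability event from the lemma, finished off by a union bound over the at most $n$ relevant vertices.
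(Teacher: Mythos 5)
Your overall route is the paper's: Corollary~\ref{col:dominance-removes-inner-disk} carries no separate proof there and is meant to follow immediately from Lemma~\ref{lem:whp-domination-near-center} together with the remark (in the overview of Theorem~\ref{thm:vertex-cover-poly}) that dominant vertices remain dominant after other dominant vertices are removed, and your $w \neq v$ case is exactly the right monotonicity observation. The genuine gap is in your $w = v$ case. Showing via Chernoff that $u$ initially dominates $\Omega(\log n)$ vertices only protects $u$ against any \emph{single} deletion; the rule is applied exhaustively, and nothing bounds how many of $u$'s dominated vertices are deleted before $u$ itself is processed. Vertices in $D(u)$ of radius at most $\rho$ are themselves dominant whp., so an adversarial order can delete them one after another, and a static count established once at the start cannot survive an unbounded sequence of deletions -- your phrase ``a bounded number of deletions cannot exhaust its dominance'' assumes a bound that does not exist. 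As a secondary issue, Theorem~\ref{thm:chernoff} as stated in the paper is an upper-tail bound, so even the $\Omega(\log n)$ lower bound on $|D(u) \cap V|$ would need a lower-tail Chernoff bound that the paper does not supply.

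What closes the case $w = v$ is not counting but a deterministic \emph{transfer} property, which is what the paper's remark encodes: if $u$ dominates $v$ and $v$ is removed because $v$ dominates some $x \neq u$, then $u$ dominates $x$ in $G - v$. Indeed, $x \in N(v) \setminus \{u\} \subseteq N(u)$, so $x$ remains a neighbor of $u$, and every $y \in N(x) \setminus \{u, v\}$ satisfies $y \in N(x) \setminus \{v\} \subseteq N(v)$, hence $y \in N(v) \setminus \{u\} \subseteq N(u)$; since $y \neq v$, the inclusion $(N(x) \setminus \{u\}) \subseteq N(u)$ holds in the reduced graph. Combined with your correct $w \neq v$ case, this shows that $u$ stays dominant under arbitrarily long deletion sequences, with no probabilistic strengthening of Lemma~\ref{lem:whp-domination-near-center} needed -- the corollary is then genuinely immediate. (One residual subtlety is shared by your writeup and the paper's one-line remark alike: the transfer argument does not cover the case where $v$ is removed because it dominates $u$ itself and has no third dominated witness, i.e., a mutually dominating pair; a fully airtight proof would treat this case separately, but your Chernoff detour does not address it either.)
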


By Corollary~\ref{col:dominance-removes-inner-disk} the dominance rule
removes all vertices of radius at most $\rho$.  Consequently, all
remaining vertices have radius at least $\rho$.  We refer to this part
of the disk as \emph{outer band}.  More precisely, the outer band is
defined as $B_O(R) \setminus B_O(\rho)$.  It remains to show that the
pathwidth of the subgraph induced by the vertices in the outer band is
small.

\subsection{Pathwidth in the Outer Band}
\label{sec:pathwidth-outer-band}

In the following, we use $G_r = G[\{v \in V\} \mid r(v) \ge r]$ to
denote the induced subgraph of $G$ that contains all vertices with
radius at least $r$.  To show that the pathwidth of $G_\rho$ (the
induced subgraph in the outer band) is small, we first show that there
is a circular arc supergraph $G_\rho^S$ of $G_\rho$ with a small
maximum clique.  We use $G^S$ to denote a circular arc supergraph of a
hyperbolic random graph $G$, which is obtained by assigning each
vertex $v$ an angular interval $I_v$ on the circle, such that the
intervals of two adjacent vertices intersect.  More precisely, for a
vertex~$v$, we set $I_v = [\varphi(v) - \theta(r(v), r(v)), \varphi(v)
+ \theta(r(v), r(v))]$.  Intuitively, this means that the interval of
a vertex contains a superset of all its neighbors that have a larger
radius, as can be seen in
Figure~\ref{fig:interval-representation}~left.  The following lemma
shows that $G^S$ is actually a supergraph of~$G$.

\begin{figure}
  \centering
  \includegraphics{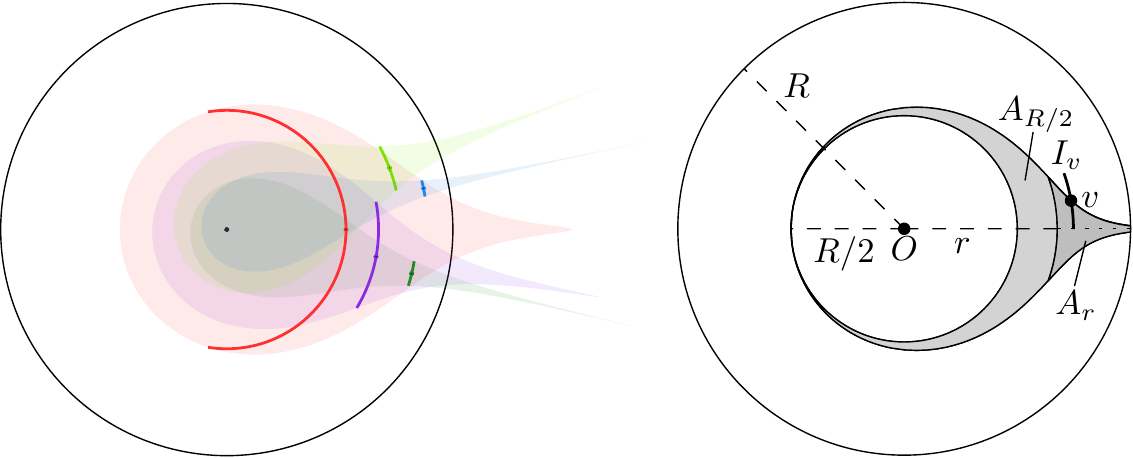}
  \caption{Left: The circular arcs representing the neighborhood of a
    vertex. For vertex $v$ the area containing the whole neighborhood
    of $v$, as well as the circular arc $I_v$ are drawn in the same
    color. Right: The area that contains the vertices whose arcs
    intersect angle $0$.  Area $A_{r}$ contains all such vertices with
    radius at least $r$.  Vertex $v$ lies on the boundary of $A_{r}$
    and its interval $I_v$ extends to $0$.}
  \label{fig:interval-representation}
\end{figure}

\begin{lemma}
  Let $G = (V, E)$ be a hyperbolic random graph.  Then $G^S$ is a
  supergraph of $G$.
\end{lemma}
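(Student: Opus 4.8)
The plan is to show that every edge of $G$ is also an edge of $G^S$; that is, for every $\{u, v\} \in E$ the arcs $I_u$ and $I_v$ intersect. The whole argument rests on a monotonicity property of the maximum connection angle $\theta$ from Equation~\eqref{eq:maximum-angular-distance}: $\theta(r, r')$ is non-increasing in each of its two arguments. Geometrically this just says that vertices farther from the origin have angularly thinner neighborhoods, and it is already visible in the asymptotic form $\theta(r, r') = 2e^{(R - r - r')/2}(1 + \Theta(e^{R - r - r'}))$, whose exponent shrinks as $r + r'$ grows.

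First I would establish this monotonicity rigorously rather than relying on the asymptotics. Since $\arccos$ is decreasing, it suffices to show that its argument $g(r, r') = (\cosh(r)\cosh(r') - \cosh(R)) / (\sinh(r)\sinh(r'))$ is non-decreasing in $r'$ (the case of $r$ then follows by symmetry). Differentiating with respect to $r'$ and clearing the positive factor $1 / (\sinh(r)\sinh^2(r'))$, the sign of the derivative is that of $\cosh(R)\cosh(r') - \cosh(r)$, which is non-negative because $r \le R$ and $\cosh(r') \ge 1$. Hence $g$ increases in $r'$ and $\theta$ decreases in $r'$.

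With monotonicity in hand the supergraph property is immediate. Fix an edge $\{u, v\} \in E$ and assume without loss of generality $r(u) \le r(v)$. Since $u$ and $v$ are adjacent, their angular distance satisfies $\Delta_\varphi(u, v) \le \theta(r(u), r(v))$, and by monotonicity in the second argument $\theta(r(u), r(v)) \le \theta(r(u), r(u))$ because $r(v) \ge r(u)$. Chaining these gives $\Delta_\varphi(u, v) \le \theta(r(u), r(u))$. As both $\Delta_\varphi(u, v)$ and the arc half-width $\theta(r(u), r(u))$ lie in $[0, \pi]$, this says precisely that $\varphi(v) \in I_u$. Since $\varphi(v)$ is also the center of $I_v$, the single point $\varphi(v)$ lies in both arcs, so they intersect and $\{u, v\}$ is an edge of $G^S$.

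The only real work is the monotonicity of $\theta$; everything after it is a short chain of inequalities. I expect the derivative computation to be the main (though mild) obstacle. The one point worth making explicit is that exhibiting the concrete common point $\varphi(v)$ certifies the intersection directly: we never invoke a sum-of-half-widths criterion, and there is no wrap-around subtlety, because $\theta$ is an $\arccos$ value and therefore always lies in $[0, \pi]$.
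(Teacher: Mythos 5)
Your proposal is correct and follows essentially the same route as the paper's proof: fix an edge $\{u,v\}$ with $r(u) \le r(v)$, bound $\Delta_\varphi(u,v) \le \theta(r(u), r(v)) \le \theta(r(u), r(u))$, and conclude $\varphi(v) \in I_u$, hence $I_u \cap I_v \neq \emptyset$. The only difference is that you verify the monotonicity of $\theta$ in its second argument by an explicit derivative computation (which is correct), whereas the paper simply asserts $\theta(r(u), r(v)) \le \theta(r(u), r(u))$ for $r(u) \le r(v)$; your version is a slightly more rigorous rendering of the same argument.
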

\begin{proof}
  Let $\{ u, v \} \in E$ be any edge in $G$.  To show that $G^S$ is a
  supergraph of $G$ we need to show that $u$ and $v$ are also adjacent
  in $G^S$, i.e., $I_{u} \cap I_{v} \neq \emptyset$.  Without loss of
  generality assume $r(u) \le r(v)$.  Since $u$ and $v$ are adjacent
  in $G$, the hyperbolic distance between them is at most $R$.  It
  follows, that their angular distance $\Delta_{\varphi}(u, v)$ is
  bounded by $\theta(r(u), r(v))$.  Since $\theta(r(u), r(v)) \le
  \theta(r(u), r(u))$ for $r(u) \le r(v)$, we have
  $\Delta_{\varphi}(u, v) \le \theta(r(u), r(u))$.  As $I_{u}$ extends
  by $\theta(r(u), r(u))$ from $\varphi(u)$ in both directions, it
  follows that $\varphi(v) \in I_{u}$.
\end{proof}

It is easy to see that, after removing a vertex from $G$ and $G^S$,
$G^S$ is still a supergraph of $G$.  Consequently, $G_\rho^S$ is a
supergraph of $G_\rho$.  It remains to show that $G_\rho^S$ has a
small maximum clique number, which is given by the maximum number of
arcs that intersect at any angle.  To this end, we first compute the
number of arcs that intersect a given angle which we set to $0$
without loss of generality.  Let $A_r$ denote the area of the disk
containing all vertices $v$ with radius $r(v) \ge r$ whose interval
$I_v$ intersects $0$, as illustrated in
Figure~\ref{fig:interval-representation}~right.  The following lemma
describes the probability for a given vertex to lie in $A_r$.


\begin{lemma}
 \label{lem:expected-intersecting-arcs} 
 Let $G$ be a hyperbolic random graph and let $r \ge R/2$.  The
 probability for a given vertex to lie in $A_{r}$ is bounded by
 \begin{align}
   \mu(A_{r}) &\le \frac{2\alpha}{(1 - \alpha)\pi} e^{-(\alpha - 1/2)R - (1 - \alpha)r}  \cdot \left( 1 + \Theta(e^{-\alpha R} + e^{-(2r - R)} - e^{-(1 - \alpha)(R - r)}) \right). \notag
 \end{align}
\end{lemma}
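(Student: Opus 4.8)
The plan is to mirror the computation used for Lemma~\ref{lem:domination-probability}: express the measure $\mu(A_{r})$ as an integral of the density $f$ over the region $A_{r}$, substitute the known asymptotic expansions, pull out the leading factor, and evaluate the resulting elementary integral.

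First I would characterize $A_{r}$ analytically. By definition, a vertex $v$ with $r(v) \ge r$ lies in $A_{r}$ exactly when its arc $I_v = [\varphi(v) - \theta(r(v), r(v)), \varphi(v) + \theta(r(v), r(v))]$ contains the angle $0$, i.e.\ when $\Delta_\varphi(v, 0) \le \theta(r(v), r(v))$. Hence the admissible angular window for a vertex at radius $r'$ has total width $2\theta(r', r')$, and since the angular coordinate is uniform (the factor $1/(2\pi)$ is already folded into $f$), integrating the joint density over $A_{r}$ gives
\begin{align}
  \mu(A_{r}) = \int_r^R 2\theta(r', r')\, f(r') \, \mathrm{d}r'. \notag
\end{align}

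Next I would substitute the expansion $\theta(r', r') = 2e^{(R - 2r')/2}(1 + \Theta(e^{R - 2r'}))$ from Equation~\eqref{eq:maximum-angular-distance} together with the density expansion from Equation~\eqref{eq:probability-density-function}. Multiplying the two leading terms produces an integrand of the form $\frac{2\alpha}{\pi} e^{-(\alpha - 1/2)R} e^{-(1 - \alpha)r'}$, carrying the two multiplicative error factors $(1 + \Theta(e^{R - 2r'}))$ from the arc width and $(1 + \Theta(e^{-\alpha R} - e^{-2\alpha r'}))$ from the density. The leading part integrates to $\frac{e^{-(1-\alpha)r}}{1 - \alpha}(1 - e^{-(1-\alpha)(R-r)})$, which already yields both the stated prefactor $\frac{2\alpha}{(1-\alpha)\pi} e^{-(\alpha - 1/2)R - (1 - \alpha)r}$ and the $-e^{-(1-\alpha)(R-r)}$ contribution inside the error bracket.

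The main work, and the step I expect to be most delicate, is bounding the two error factors uniformly over the integration range $r' \in [r, R]$. Since the lemma only claims an upper bound, I would bound each multiplicative error by its worst case on this interval. The density error $e^{-2\alpha r'}$ is dominated by $e^{-\alpha R}$ on $[r, R]$ (using $r \ge R/2$), leaving the $e^{-\alpha R}$ term. The arc-width error $e^{R - 2r'}$ is decreasing in $r'$ and thus maximized at the lower endpoint $r' = r$, contributing $e^{R - 2r} = e^{-(2r - R)}$; because the error-weighted integrand $e^{R}e^{-(3-\alpha)r'}$ is likewise decreasing, the ratio of the error integral to the leading integral stays $\Theta(e^{-(2r-R)})$. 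Collecting these relative errors together with the $-e^{-(1-\alpha)(R-r)}$ factor from the integral into a single $\Theta(\cdot)$ bracket gives the claimed factor $1 + \Theta(e^{-\alpha R} + e^{-(2r-R)} - e^{-(1-\alpha)(R-r)})$, completing the estimate.
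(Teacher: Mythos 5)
Your proposal is correct and follows essentially the same route as the paper's proof: the same integral representation $\mu(A_r) = \int_r^R 2\,\theta(x,x) f(x)\,\mathrm{d}x$, the same substitution of the expansions from Equations~\eqref{eq:maximum-angular-distance} and~\eqref{eq:probability-density-function}, the same use of $r \ge R/2$ to absorb $e^{-2\alpha x}$ into $e^{-\alpha R}$, and the same split of the arc-width error into a second integral whose ratio to the leading term is $\Theta(e^{-(2r-R)})$. The paper evaluates that error integral explicitly as $\Theta\bigl(e^{R}e^{-(3-\alpha)r}(1 - e^{-(3-\alpha)(R-r)})\bigr)$ before factoring, whereas you argue via monotonicity of the error-weighted integrand, but this is a cosmetic difference only.
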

\begin{proof}
  We obtain the measure of $A_r$ by integrating the probability
  density function over~$A_r$.  Due to the definition of $I_v$ we can
  conclude that $A_r$ includes all vertices $v$ with radius $r(v) \ge
  r$ whose angular distance to $0$ is at most $\theta(r(v), r(v))$,
  defined in Equation~(\ref{eq:maximum-angular-distance}).  We obtain,
  \begin{align}
    \mu(A_r) &= \int_r^R 2 \int_0^{\theta(x, x)} f(x) \, \mathrm{d}\varphi \, \mathrm{d}x \notag \\
             &= 2 \int_r^R 2e^{(R - 2x) / 2}(1 \pm \Theta(e^{R - 2x})) \cdot \frac{\alpha}{2 \pi} e^{-\alpha(R - x)} (1 + \Theta(e^{-\alpha R} - e^{-2\alpha x})) \, \mathrm{d}x. \notag
  \end{align}
  As before, we can conclude that $(1 + \Theta(e^{-\alpha R} - e^{-2
    \alpha r})) = (1 + \Theta(e^{-\alpha R}))$, since $r \ge R/2$.  By
  moving constant factors out of the integral, the expression can be
  simplified to
  \begin{align}
    \mu(A_r) &\le \frac{2 \alpha}{\pi} e^{-(\alpha - 1/2)R} (1 + \Theta(e^{-\alpha R})) \int_r^R e^{-(1 - \alpha)x}(1 + \Theta(e^{R - 2x})) \, \mathrm{d}x \notag .
  \end{align}
  We split the sum in the integral and deal with the two resulting
  integrals separately.
  \begin{align}
    \mu(A_r) &\le \frac{2 \alpha}{\pi} e^{-(\alpha - 1/2)R} (1 + \Theta(e^{-\alpha R})) \left( \int_r^R e^{-(1 - \alpha)x} \, \mathrm{d}x + \Theta \left( \int_r^R e^{-(1 - \alpha)x + R - 2x} \, \mathrm{d}x \right) \right) \notag \\
             &= \frac{2 \alpha}{\pi} e^{-(\alpha - 1/2)R} (1 + \Theta(e^{-\alpha R})) \notag \\
             &\qquad \cdot \Bigg( \frac{1}{1 - \alpha} e^{-(1 - \alpha)r}(1 - e^{-(1 - \alpha)(R - r)}) + \Theta \left( e^R e^{-(3 - \alpha)r}(1 - e^{-(3 - \alpha)(R - r)}) \right) \Bigg). \notag
  \end{align}
  By placing $1/(1 - \alpha)e^{-(1 - \alpha)r}$ outside of the
  brackets we obtain
  \begin{align}
    \mu(A_r) &\le \frac{2 \alpha}{(1 - \alpha)\pi} e^{-(\alpha - 1/2)R - (1 - \alpha)r} (1 + \Theta(e^{-\alpha R})) \notag \\
             &\qquad \cdot \Bigg( (1 - e^{-(1 - \alpha)(R - r)}) + \Theta \left( e^{R - 2r}(1 - e^{-(3 - \alpha)(R - r)}) \right) \Bigg). \notag
  \end{align}
  Simplifying the remaining error terms then yields the claim.
\end{proof}

We can now bound the maximum clique number in $G_\rho^S$ and thus its
interval width $\iw(G_\rho^S)$.

\begin{theorem}
  \label{thm:maximum-clique-number}
  Let $G$ be a hyperbolic random graph and $r \ge R / 2$.  Then there
  exists a constant $c$ such that, whp., $\iw(G_r^S) =
  \mathcal{O}(\log (n))$ if $r \ge R - \frac{1}{(1 - \alpha)}\log\log
  (n^c)$, and otherwise
  \begin{align}
    \iw(G_r^S) &\le \frac{4 \alpha}{(1 - \alpha)\pi} ne^{-(\alpha - 1/2)R - (1 - \alpha)r} \left( 1 + \Theta(e^{-\alpha R} + e^{-(2r - R)} - e^{-(1 - \alpha)(R - r)}) \right). \notag
  \end{align}
\end{theorem}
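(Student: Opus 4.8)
The plan is to bound the maximum clique number $\omega(G_r^S)$ of the circular arc supergraph and then invoke the relation between circular arc graphs and interval width from the preliminaries, namely $\iw(G_r^S) \le 2\,\omega(G_r^S)$; this factor $2$ is exactly what turns the coefficient $2\alpha/((1-\alpha)\pi)$ of Lemma~\ref{lem:expected-intersecting-arcs} into the $4\alpha/((1-\alpha)\pi)$ of the claim. Since the clique number of a circular arc graph equals the maximum number of arcs through a common angle, and since the number of covered arcs can only increase at a left endpoint and decrease at a right endpoint as the angle sweeps around the circle, this maximum is attained immediately after the left endpoint of some arc $I_v$. Hence it suffices to control, for every vertex $v$ with $r(v) \ge r$, the random variable $Y_v$ counting the arcs of $G_r^S$ that cover the left endpoint $\psi_v$ of $I_v$, and to write $\omega(G_r^S) = \max_v Y_v$.

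First I would fix a vertex $v$ and condition on its coordinates, so that $\psi_v$ becomes a fixed angle. The arcs covering $\psi_v$ are that of $v$ itself together with those of the vertices $w \ne v$ with $r(w) \ge r$ whose arc contains $\psi_v$. Because the angular coordinates are drawn uniformly and independently, rotational symmetry identifies the event that $w$'s arc covers $\psi_v$ with the event defining $A_r$ at the reference angle $0$; consequently each such $w$ contributes independently with probability $\mu(A_r)$, and $Y_v$ is distributed as $1 + \mathrm{Bin}(n-1,\mu(A_r))$, so that $\mathbb{E}[Y_v] \le n\,\mu(A_r) + 1$, where $\mu(A_r)$ is bounded by Lemma~\ref{lem:expected-intersecting-arcs}.

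Next I would apply the Chernoff bound (Theorem~\ref{thm:chernoff}) to $Y_v$ with the upper bound $f(n) = \max\{n\,\mu(A_r),\,C\log n\}$ on the expectation, where $C$ is large enough that $f(n) = \Omega(\log n)$. Choosing the Chernoff constant to give failure probability $\mathcal{O}(n^{-2})$, a union bound over the at most $n$ left endpoints $\psi_v$ shows that, whp., $\max_v Y_v$ and hence $\omega(G_r^S)$ is bounded by a constant multiple of $f(n)$. The case split of the theorem corresponds to which term attains the maximum in $f(n)$: using $e^{R/2} = \Theta(n)$ one checks that $n\,\mu(A_r) = \Theta(\log n)$ precisely at the threshold $r = R - \frac{1}{1-\alpha}\log\log(n^c)$ (note $\log\log(n^c) = \log\log n + \Theta(1)$). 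For $r$ above the threshold, $n\,\mu(A_r) = \mathcal{O}(\log n)$, so $f(n) = \Theta(\log n)$ and $\iw(G_r^S) = \mathcal{O}(\log n)$. For $r$ below it, $n\,\mu(A_r) = \omega(\log n)$, the regime in which the multiplicative Chernoff bound concentrates $Y_v$ around its mean up to a factor $1 + o(1)$; combining this with $\iw(G_r^S) \le 2\max_v Y_v$ and the estimate for $\mu(A_r)$ yields the explicit bound, the $1 + o(1)$ being absorbed into the stated $\Theta$-error term.

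The main obstacle is handling the maximum over a continuum of angles rigorously, since a naive union bound would range over infinitely many angles. The structural fact that makes the argument work is that the clique number of a circular arc graph is always attained at an arc endpoint, which reduces the continuum to the $\mathcal{O}(n)$ candidate angles $\psi_v$ on which the union bound is performed. A secondary subtlety is that matching the precise coefficient $4\alpha/((1-\alpha)\pi)$ rather than merely an $\mathcal{O}(\cdot)$ bound requires that, below the threshold, $n\,\mu(A_r)$ be strictly super-logarithmic so that the concentration factor is $1 + o(1)$; otherwise one obtains the bound only up to an unspecified constant.
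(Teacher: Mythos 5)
Your proposal follows essentially the same route as the paper's proof: bound the expected number of arcs through a fixed angle via Lemma~\ref{lem:expected-intersecting-arcs}, apply the Chernoff bound (Theorem~\ref{thm:chernoff}) with the case distinction at $r = R - \frac{1}{1-\alpha}\log\log(n^c)$ (using the monotonicity of the expectation to justify the $\mathcal{O}(\log n)$ bound above the threshold), reduce the continuum of angles to the $\mathcal{O}(n)$ arc endpoints for the union bound, and double the clique bound to pass to interval width. The subtlety you flag about recovering the exact coefficient $4\alpha/((1-\alpha)\pi)$ rather than an unspecified constant multiple is genuine, but it is present in the paper's own proof as well, since its Chernoff statement likewise only yields $X \le c'f(n)$ for an unspecified constant $c'$; your version, with the explicit distribution $Y_v \sim 1 + \mathrm{Bin}(n-1, \mu(A_r))$ and the endpoint argument spelled out, is if anything more careful.
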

\begin{proof}
  We start by determining the expected number of arcs that intersect
  at a given angle, which can be done by computing the expected number
  of vertices in $A_{r}$, using
  Lemma~\ref{lem:expected-intersecting-arcs}:
  \begin{align}
    \mathbb{E}[|\{ v \in A_{r}\}|] &\le \frac{2\alpha}{(1 - \alpha)\pi} n e^{-(\alpha - 1/2)R - (1 - \alpha)r} ( 1 + \Theta(e^{-\alpha R} + e^{-(2r - R)} - e^{-(1 - \alpha)(R - r)}) ). \notag
  \end{align}
  It remains to show that this bound holds with high probability at
  every angle.  To this end, we make use of a Chernoff bound
  (Theorem~\ref{thm:chernoff}), by first showing that the bound on
  $\mathbb{E}[|\{ v \in A_{r}\}|]$ is $\Omega(\log(n))$.  We start
  with the case where $r < R - \frac{1}{1 - \alpha}\log\log(n^c)$.
  \begin{align}
    \mathbb{E}[|\{ v \in A_{r}\}|] &< \frac{2\alpha}{(1 - \alpha)\pi} n e^{-(\alpha - 1/2)R - (1 - \alpha)(R - 1/(1 - \alpha)\log\log(n^c))} \notag \\
                                   &\qquad \cdot \Big( 1 + \Theta(e^{-\alpha R} + e^{-(2(R - 1/(1 - \alpha)\log\log(n^c)) - R)} \notag \\
                                   & \hphantom{\qquad \cdot \Big(} - e^{-(1 - \alpha)(R - (R - 1/(1 - \alpha)\log\log(n^c)))}) \Big) \notag \\
                                   &= \frac{2\alpha}{(1 - \alpha)\pi} n e^{-R/2 + \log\log(n^c))} \notag \\
                                   &\qquad \cdot \Big( 1 + \Theta(e^{-\alpha R} + e^{-(R - 2/(1 - \alpha)\log\log(n^c))} - e^{-\log\log(n^c)}) \Big) \notag
  \end{align}
  Substituting $R = 2\log(8n/(\pi \bar{\kappa}))$ we obtain
  \begin{align}
                                       \mathbb{E}[|\{ v \in A_{r}\}|] &< \frac{\alpha \bar{\kappa} c}{4(1 - \alpha)} \log(n) (1 + o(1)). \notag
  \end{align}
  Thus, for all radii smaller than
  $R - \frac{1}{(1 - \alpha)}\log\log(n^c)$, the resulting upper bound
  is lower bounded by $\Omega(\log(n))$, which lets us apply
  Theorem~\ref{thm:chernoff}.  Moreover, as
  $\mathbb{E}[|\{ v \in A_{r}\}|]$ decreases with increasing $r$,
  $\mathcal O(\log(n))$ is a pessimistic but valid upper bound for the
  case $r \ge R - \frac{1}{(1 - \alpha)}\log\log(n^c)$.  Thus, we can
  also apply Theorem~\ref{thm:chernoff} to this case, using the
  $\mathcal O(\log(n))$~bound.

  By Theorem~\ref{thm:chernoff}, we can choose $c$ such that in both
  cases the bound holds with probability $1 - \mathcal{O}(n^{-c'})$
  for any $c'$ at a given angle.  In order to see that it holds at
  every angle, note that it suffices to show that it holds at all arc
  endings as the number of intersecting arcs does not change in
  between arc endings.  Since there are exactly $2n$ arc endings, we
  can apply union bound and obtain that the bound holds with
  probability $1 - \mathcal{O}(n^{-c' + 1})$ for any~$c'$ at every
  angle.  Since our bound on $\mathbb{E}[|\{ v \in A_r \}|]$ is an
  upper bound on the maximum clique size of $G_r^S$, the interval
  width of $G_r^S$ is at most twice as large, as argued in
  Section~\ref{sec:preliminaries}.
\end{proof}

Since the interval width of a circular arc supergraph of $G$ is an
upper bound on the pathwidth of $G$~\cite[Theorem 7.14]{cfk-pa-15} and
since $\rho \ge R - 1/(1 - \alpha)\log\log(n^c)$ for $\alpha \in (1/2,
1)$, we immediately obtain the following corollary.

\begin{corollary}
  \label{col:pathwidth}
  Let G be a hyperbolic random graph and let $G_\rho$ be the subgraph
  obtained by removing all vertices with radius at most $\rho = R -
  2\log\log(n^c)$.  Then, $\pw(G_\rho) = \mathcal{O}(\log(n))$.
\end{corollary}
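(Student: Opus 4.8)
The plan is to read the corollary off directly from Theorem~\ref{thm:maximum-clique-number} together with the interval-width/pathwidth machinery assembled in Section~\ref{sec:preliminaries}. Concretely, I would instantiate Theorem~\ref{thm:maximum-clique-number} at $r = \rho$, argue that $\rho$ lands in the favorable regime where the theorem yields $\iw(G_\rho^S) = \mathcal{O}(\log(n))$ whp., and then push this bound from the circular arc supergraph $G_\rho^S$ down to the pathwidth of $G_\rho$ itself. There is essentially no new computation to do; the work is in verifying that the hypotheses of the theorem are met and chaining the inequalities correctly.

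First I would check the two hypotheses of Theorem~\ref{thm:maximum-clique-number}. Since $R = 2\log(8n/(\pi\bar{\kappa})) = \Theta(\log(n))$ while $2\log\log(n^c) = \Theta(\log\log(n))$, the offset is $o(R)$, so $\rho = R - 2\log\log(n^c) \ge R/2$ for all sufficiently large $n$, which gives the requirement $r \ge R/2$. The crucial point is the threshold $R - \frac{1}{1 - \alpha}\log\log(n^c)$: because $\alpha \in (1/2, 1)$ we have $1 - \alpha < 1/2$ and hence $\frac{1}{1 - \alpha} > 2$, so $R - \frac{1}{1 - \alpha}\log\log(n^c) < R - 2\log\log(n^c) = \rho$. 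Thus $\rho$ satisfies $\rho \ge R - \frac{1}{1 - \alpha}\log\log(n^c)$, which is exactly the case of Theorem~\ref{thm:maximum-clique-number} that outputs $\iw(G_\rho^S) = \mathcal{O}(\log(n))$, whp. This is the one genuinely load-bearing inequality, and it is where $\alpha < 1$ is used; everything else is bookkeeping.

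Finally I would transfer the bound to the pathwidth. The lemma preceding Theorem~\ref{thm:maximum-clique-number} shows that $G^S$ is a supergraph of $G$, and this property is preserved under vertex deletion, so $G_\rho^S$ is a circular arc supergraph of $G_\rho$. Any interval supergraph of $G_\rho^S$ is then also an interval supergraph of $G_\rho$, so $\iw(G_\rho) \le \iw(G_\rho^S) = \mathcal{O}(\log(n))$. Invoking the equivalence recalled in Section~\ref{sec:preliminaries} that a graph has pathwidth at most $k$ if and only if its interval width is at most $k+1$~\cite[Theorem 7.14]{cfk-pa-15}, I conclude $\pw(G_\rho) \le \iw(G_\rho) - 1 = \mathcal{O}(\log(n))$, whp. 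The main obstacle is really only the threshold comparison in the previous paragraph; once $\rho$ is placed in the logarithmic-clique regime, the corollary follows with no further work.
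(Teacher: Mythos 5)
Your proposal is correct and takes essentially the same route as the paper, which likewise obtains the corollary immediately from Theorem~\ref{thm:maximum-clique-number} by noting that $\rho \ge R - \frac{1}{1-\alpha}\log\log(n^c)$ holds for $\alpha \in (1/2,1)$ and then transferring the $\mathcal{O}(\log(n))$ interval-width bound on the circular arc supergraph to the pathwidth of $G_\rho$ via \cite[Theorem 7.14]{cfk-pa-15}. Your extra checks (that $\rho \ge R/2$ for large $n$, and that the supergraph property survives vertex deletion) merely make explicit what the paper leaves implicit.
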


We are now ready to prove our main theorem, which we
restate for the sake of readability.

\begin{backInTime}{thm-vertex-cover-poly}
  \begin{theorem}
    Let $G$ be a hyperbolic random graph on $n$ vertices.  Then the
    \textsc{VertexCover} problem in $G$ can be solved in $\poly(n)$
    time, with high probability.
  \end{theorem}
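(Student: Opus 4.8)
The plan is to assemble the algorithm announced at the start of this section and to bound its running time with the help of the two corollaries. The algorithm runs in three phases. First it applies the dominance reduction rule exhaustively, adding every removed vertex to the cover. Then it computes a tree decomposition of small width of the graph that remains. Finally it solves \textsc{VertexCover} on this remainder by dynamic programming over the decomposition. Correctness is immediate from the correctness of the dominance rule (argued at the beginning of this section), which guarantees that a minimum vertex cover of $G$ is obtained as the union of the removed dominant vertices and a minimum vertex cover of the remaining graph, together with the fact that \textsc{VertexCover} can be solved optimally by dynamic programming on a tree decomposition.

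First I would analyze the reduction phase. Deciding whether a fixed vertex $u$ dominates a neighbor $v$ is just the test $N(v) \setminus \{u\} \subseteq N(u)$, which takes polynomial time; since at most $n$ vertices are ever removed, the whole phase costs $\poly(n)$. By Corollary~\ref{col:dominance-removes-inner-disk}, whp.\ it removes every vertex of radius at most $\rho = R - 2\log\log(n^c)$, so the remaining graph $H$ is an induced subgraph of $G_\rho$. Because pathwidth never increases when passing to a subgraph, Corollary~\ref{col:pathwidth} yields $\tw(H) \le \pw(H) \le \pw(G_\rho) = \mathcal{O}(\log n)$, whp.

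Next I would bound the cost of the two remaining phases. The standard dynamic program for \textsc{VertexCover} on a tree decomposition of width $w$ runs in time $2^{\mathcal{O}(w)}\poly(n)$, which is polynomial exactly when $w = \mathcal{O}(\log n)$. Hence it suffices to compute, in polynomial time, a decomposition of $H$ whose width is $\mathcal{O}(\log n)$. This I would obtain from a treewidth approximation that loses only a constant factor and runs in time $2^{\mathcal{O}(w)}n$: combined with a doubling search over the width parameter it terminates in $\poly(n)$ time once $\tw(H) = \mathcal{O}(\log n)$ and returns a decomposition of width $\mathcal{O}(\log n)$. (Alternatively, one may build a path decomposition of width $\mathcal{O}(\log n)$ directly from the bounded-interval-width circular arc supergraph $G_\rho^S$ constructed in Section~\ref{sec:pathwidth-outer-band}.) Running the dynamic program then returns a minimum vertex cover of $H$ in $\poly(n)$ time, and by the correctness argument above its union with the dominant vertices removed earlier is a minimum vertex cover of $G$.

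The main obstacle I expect is precisely this width accounting: the whole argument goes through only because the pathwidth bound is $\mathcal{O}(\log n)$ and the decomposition algorithm loses at most a constant factor, so that $2^{\mathcal{O}(\log n)} = \poly(n)$; a merely polylogarithmic width would give only a quasi-polynomial bound. The rest is routine. The two structural guarantees hold only whp., so I would finish with a union bound over the constantly many whp.\ events to conclude that the entire procedure runs in $\poly(n)$ time with high probability.
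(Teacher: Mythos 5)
Your proposal is correct and follows essentially the same route as the paper's proof: add the dominant vertices to the cover (Corollary~\ref{col:dominance-removes-inner-disk} ensures this clears the disk of radius $\rho$ whp.), then solve the remainder by dynamic programming on a tree decomposition of width $\mathcal{O}(\log n)$ guaranteed by Corollary~\ref{col:pathwidth}. The details you spell out—the constant-factor treewidth approximation running in $2^{\mathcal{O}(w)}\poly(n)$ time, the observation that pathwidth is monotone under your exhaustive (rather than one-shot) application of the rule, and the final union bound—are exactly what the paper's citation of Theorems~7.18 and~7.9 of~\cite{cfk-pa-15} and its remark on order-independence of the dominance rule cover implicitly.
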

  \begin{proof}
    Consider the following algorithm that finds the minimum vertex
    cover of $G$.  We start with an empty vertex cover $S$.
    Initially, all dominant vertices are added to $S$, which is
    correct due to the dominance rule.  By
    Lemma~\ref{lem:whp-domination-near-center}, this includes all
    vertices of radius at most $\rho = R - 2\log\log(n^c)$, for some
    constant $c$, with high probability.  Obviously, finding all
    vertices that are dominant can be done in $\poly(n)$ time.  It
    remains to determine a vertex cover of $G_\rho$.  By
    Corollary~\ref{col:pathwidth}, the pathwidth of $G_\rho$ is
    $\mathcal{O}(\log(n))$, with high probability.  Since the
    pathwidth is an upper bound on the treewidth, we can find a tree
    decomposition of $G_\rho$ and solve the \textsc{VertexCover}
    problem in $G_\rho$ in $\poly(n)$ time~\cite[Theorems 7.18 and
    7.9]{cfk-pa-15}.
  \end{proof}
\end{backInTime}

Moreover, linking the radius of a vertex in
Theorem~\ref{thm:maximum-clique-number} with its expected degree leads
to the following corollary, which is interesting in its own right.  It
links the pathwidth to the degree $d$ in the graph $G_{\le d}$.
Recall that $G_{\le d}$ denotes the subgraph of $G$ induced by the
vertices of degree at most $d$.

\begin{corollary}
  Let $G$ be a hyperbolic random graph and let $d \le \sqrt{n}$.
  Then, with high probability, $\pw(G_{\le d}) = \mathcal{O}(d^{2 -
    2\alpha} + \log(n))$.
\end{corollary}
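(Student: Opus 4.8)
The plan is to translate the degree threshold $d$ into a radius threshold and then invoke Theorem~\ref{thm:maximum-clique-number}. The starting point is the known fact that the expected degree of a vertex $u$ equals $n$ times the measure of its neighborhood disk, which grows as $\frac{2\alpha}{(\alpha - 1/2)\pi}e^{-r(u)/2}$~\cite[Lemma 3.2]{gpp-rhg-12}; that is, the expected degree of a vertex at radius $r$ is $\Theta(n e^{-r/2})$ and decreases monotonically in $r$. Setting this equal to a suitable constant multiple of $D := \max(d, \log n)$ and solving for $r$ yields a threshold radius $r_d = R - 2\log D + \Theta(1)$ at which the expected degree is $\Theta(D)$. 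Since $R = 2\log(8n/(\pi\bar\kappa)) = 2\log n + \Theta(1)$, this gives $r_d = 2\log n - 2\log D + \Theta(1)$, so the constraint $d \le \sqrt{n}$ (hence $\log D \le \tfrac12\log n + \Theta(1)$ for $n$ large) guarantees $r_d \ge R/2$, which is exactly the hypothesis needed to apply Theorem~\ref{thm:maximum-clique-number} at $r = r_d$.

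First I would establish that, with high probability, $G_{\le d}$ is a subgraph of $G_{r_d}$; equivalently, that every vertex of radius strictly below $r_d$ has degree larger than $d$. Fix such a vertex $u$: its expected degree exceeds the expected degree at $r_d$, namely $\Theta(D) = \Omega(\log n)$, and its degree is a sum of independent indicator variables once $u$'s position is fixed, since the remaining vertices are placed independently. A lower-tail Chernoff bound then shows that $\deg(u) > d$ fails with probability $\mathcal{O}(n^{-2})$, and a union bound over all $n$ vertices makes the inclusion $G_{\le d} \subseteq G_{r_d}$ hold with high probability. Because pathwidth is monotone under taking subgraphs, this yields $\pw(G_{\le d}) \le \pw(G_{r_d})$.

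It then remains to bound $\pw(G_{r_d})$ by feeding $r_d$ into Theorem~\ref{thm:maximum-clique-number}. Plugging $r_d = R - 2\log D + \Theta(1)$ into the exponent $-(\alpha - 1/2)R - (1 - \alpha)r$ collapses the $R$-terms to $-R/2$ and produces a factor $D^{2(1-\alpha)}$, so that $n e^{-(\alpha - 1/2)R - (1 - \alpha)r_d} = \Theta(n e^{-R/2} D^{2 - 2\alpha}) = \Theta(D^{2 - 2\alpha})$, using $e^{-R/2} = \Theta(n^{-1})$. Hence $\iw(G_{r_d}^S) = \mathcal{O}(D^{2 - 2\alpha})$ when $r_d$ lies below the theorem's threshold, and $\mathcal{O}(\log n)$ otherwise; in either case $\iw(G_{r_d}^S) = \mathcal{O}(D^{2 - 2\alpha} + \log n)$, and since $D = \max(d, \log n)$ and $2 - 2\alpha \le 1$ we have $D^{2 - 2\alpha} \le d^{2 - 2\alpha} + (\log n)^{2 - 2\alpha} \le d^{2 - 2\alpha} + \log n$. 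Using that the interval width of a circular arc supergraph upper-bounds the pathwidth, I conclude $\pw(G_{\le d}) \le \pw(G_{r_d}) \le \iw(G_{r_d}^S) = \mathcal{O}(d^{2 - 2\alpha} + \log n)$.

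The main obstacle is the concentration step: the Chernoff bound stated in Theorem~\ref{thm:chernoff} controls only the upper tail, whereas here I need a lower-tail estimate to rule out low-degree vertices near the center, and, more importantly, the union bound only survives when the expected degree at the threshold is $\Omega(\log n)$. This is precisely why I clip the degree at $D = \max(d, \log n)$ rather than using $d$ directly; the clipping is harmless for the interval-width estimate (as $(\log n)^{2 - 2\alpha} \le \log n$) and is exactly what produces the additive $\log n$ term in the final bound, matching Corollary~\ref{col:pathwidth} for small $d$.
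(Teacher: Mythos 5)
Your proposal is correct and follows essentially the same route as the paper's proof: translate the degree threshold into a radius threshold $r = R - 2\log(\Theta(d))$, show that with high probability $G_{\le d} \subseteq G_r$ via concentration of vertex degrees, and then apply Theorem~\ref{thm:maximum-clique-number} at that radius, with the small-$d$ regime absorbed into the additive $\log(n)$ term (the paper handles it at the end by monotonicity with $d' = \log(n)^{1/(2-2\alpha)}$, you handle it upfront by clipping $D = \max(d, \log n)$ --- the same trick). Your explicit remark that the inclusion step needs a lower-tail Chernoff bound is a point the paper glosses over (its Theorem~\ref{thm:chernoff} states only the upper tail), so your treatment of that step is, if anything, slightly more careful.
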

\begin{proof}
  Consider the radius $r = R - 2 \log(\varepsilon d)$ for some
  constant $\varepsilon > 0$, and the graph $G_{r}$ which is obtained
  by removing all vertices of radius at most $r$.  By substituting $R
  = 2\log(8n/(\pi \bar{\kappa}))$ and using~\cite[Lemma
  3.2]{gpp-rhg-12} we can compute the expected degree of a vertex
  with radius $r$ as
  \begin{align}
    \mathbb{E}[\deg(v) \mid r(v) = r] &= \frac{2 \alpha}{(\alpha - 1/2) \pi} n e^{-r/2}(1 \pm \mathcal{O}(e^{-(\alpha - 1/2)r})) = \frac{\alpha \bar{\kappa} \varepsilon}{4(\alpha - 1/2)} d (1 \pm o(1)). \notag
  \end{align}

  First assume that $d \ge \log(n)^{1/(2-2\alpha)}$.  We handle the
  other case later.
  Since $d \in \Omega(\log(n))$ we can choose $\varepsilon$ large
  enough to apply Theorem~\ref{thm:chernoff} and conclude that this
  holds with high probability. Furthermore, since a smaller radius
  implies a larger degree, we know that, with high probability, all
  nodes $v$ with radius at most $r$, have
  \begin{align}
    \deg(v) \ge \frac{\alpha \bar{\kappa} \varepsilon}{4(\alpha - 1/2)} d (1 \pm o(1)). \notag
  \end{align}
  For large enough $n$ we can choose $\varepsilon$ such that, with
  high probability, $G_r$ is a supergraph of $G_{\le d}$.  To prove
  the claim, it remains to bound the pathwidth of $G_r$.  If $r > R -
  1/(1 - \alpha)\log\log(n^c)$, we can apply the first part of
  Theorem~\ref{thm:maximum-clique-number} to obtain $\iw(G_r^S) =
  \mathcal{O}(\log(n))$.  Otherwise, we use part two to conclude that
  the interval width of $G_r$ is at most
  \begin{align}
    \iw(G_r^S) &\le \frac{4 \alpha}{(1 - \alpha)\pi} ne^{-(\alpha - 1/2)R - (1 - \alpha)r} \left( 1 + \Theta(e^{-\alpha R} + e^{-(2r - R)} - e^{-(1 - \alpha)(R - r)}) \right) \notag \\
               &= \frac{\alpha \bar{\kappa} \varepsilon^{2 - 2\alpha} }{(2 - 2\alpha)} d^{2 - 2\alpha} \left( 1 + \Theta(n^{-2\alpha} + ((\varepsilon d)^2/n)^2 - (\varepsilon d)^{-(2 - 2\alpha)}) \right) = \mathcal{O}(d^{2 - 2\alpha}). \notag
  \end{align}
  As argued in Section~\ref{sec:preliminaries} the interval width of a
  graph is an upper bound on the pathwidth.

  For $d < \log(n)^{1/(2-2\alpha)}$ (which we excluded above),
  consider $G_{\le d'}$ for $d' = \log(n)^{1/(2-2\alpha)} > d$.  As we
  already proved the corollary for $d'$, we obtain $\pw(G_{\le d'}) =
  \mathcal{O}(d'^{2 - 2\alpha} + \log(n)) = \mathcal{O}(\log(n))$.  As
  $G_{\le d}$ is a subgraph of $G_{\le d'}$, the same bound holds for
  $G_{\le d}$.
\end{proof}

\section{Discussion}
\label{sec:discussion}

Our results show that a heterogeneous degree distribution as well as
high clustering make the dominance rule very effective.  This matches
the behavior for real-world networks, which typically exhibit these
two properties.  However, our analysis actually makes more specific
predictions: (I) vertices with sufficiently high degree usually have
at least one neighbor they dominate and can thus safely be included in
the vertex cover; and (II) the graph remaining after deleting the high
degree vertices has simple structure, i.e., small pathwidth.

To see whether this matches the real world, we run experiments on
\num{59} networks from several network datasets~\cite{gephi-datasets,
  bm-p-06, k-k-13, lk-sd-14, ra-ndrigav-15}.  Although the focus of
this paper is the theoretical analysis on hyperbolic random graphs, we
briefly report on our experimental results; see Table~\ref{tab:data}
in Appendix~\ref{sec:experimental-data}.  Out of the \num{59} instances, we can
solve \textsc{VertexCover} for \num{47} networks in reasonable time.
We refer to these as \emph{easy}, while the remaining \num{12} are
called \emph{hard}.  Note that our theoretical analysis aims at
explaining why the easy instances are easy.

Recall from Lemma~\ref{lem:whp-domination-near-center} that all
vertices with radius at most $R - 2\log\log(n^{4/\bar{\kappa}})$
probably dominate, which corresponds to an expected degree of $\alpha
/ (\alpha - 1/2) \cdot \log n$.  For more than half of the \num{59}
networks, more than \SI{78}{\percent} of the vertices above this
degree were in fact dominant.  For more than a quarter of the
networks, more than \SI{96}{\percent} were dominant.  Restricted to
the \num{47} easy instances, these number increase to
\SI{82}{\percent} and \SI{99}{\percent}, respectively.

Experiments concerning the pathwidth of the resulting graph are much
more difficult, due to the lack of efficient tools.  Therefore, we
used the tool by Tamaki et al.~\cite{tosm-tmpt-17} to heuristically
compute upper bounds on the treewidth instead.  As in our analysis, we
only removed vertices that dominate in the original graph instead of
applying the reduction rule exhaustively.  On the resulting subgraphs,
the treewidth heuristic ran with a \SI{15}{min} timeout.  The
resulting treewidth is at most \num{50} for \SI{44}{\percent} of the
networks, at most \num{15} for \SI{34}{\percent}, and at most \num{5}
for \SI{25}{\percent}.  Restricted to easy instances, the values
increase to \SI{55}{\percent}, \SI{43}{\percent}, and
\SI{32}{\percent}, respectively.

Hyperbolic random graphs are of course an idealized representation of
real-world networks.  However, these experiments indicate that the
predictions derived from the model match the real world, at least for
a significant fraction of networks.

\subparagraph{Approximation.}

Concerning approximation algorithms for \textsc{VertexCover}, there is
a similar theory-practice gap as for exact solutions.  In theory,
there is a simple 2-approximation and the best known polynomial time
approximation reduces the factor to $2 -
\Theta(\log(n)^{-1/2})$~\cite{k-barvcp-09}.  However, it is NP-hard to
approximate \textsc{VertexCover} within a factor of
$1.3606$~\cite{ds-hamvc-05}, and presumably it is even NP-hard to
approximate within a factor of $2 - \varepsilon$ for all $\varepsilon
> 0$~\cite{kr-v-08}.  Moreover, the greedy strategy that iteratively
adds the vertex with maximum degree to the vertex cover and deletes
it, is only a $\log n$ approximation.  However, on scale-free networks
this strategy performs exceptionally well with approximation ratios
very close to~1~\cite{dgd-vccn-13}.

Our results for hyperbolic random graphs at least partially explain
this good approximation ratio.
Lemma~\ref{lem:whp-domination-near-center} states that, with high
probability, we do not make any mistake by taking all vertices below a
certain radius $\rho$, which corresponds to vertices of at least
logarithmic degree.  The same computation for larger values of $\rho$
does no longer give such strong guarantees.  However, it still gives
bounds on the probability for making a mistake.  In fact, this error
probability is sub-constant as long as the corresponding expected
degree is super-constant.

Although this is not a formal argument, it still explains to a degree
why greedy works so well on networks with a heterogeneous degree
distribution and high clustering.  Moreover, it indicates how the
greedy algorithm should be adapted to obtain better approximation
ratios: As the probability to make a mistake grows with growing radius
and thus with shrinking vertex degree, the majority of mistakes are
done when all vertices have already low degree.  However, for
hyperbolic random graphs, the subgraphs induced by vertices below a
certain constant degree decompose into small components for $n \to
\infty$.  It thus seems to be a good idea to run the greedy algorithm
only until all remaining vertices have low degree, say $k$.  The
remaining small connected components of maximum-degree $k$ can then be
solved with brute force in reasonable time.  In the following we call
the resulting algorithm \emph{$k$-adaptive greedy}.

We ran experiments on the \num{47} easy real networks mentioned above
(for the hard instances, we cannot measure approximation ratios).  For
these networks, we compare the normal greedy algorithm with 2- and
4-adaptive greedy.  Note that 2-adaptive greedy is special, as
\textsc{VertexCover} can be solved efficiently on graphs with maximum
degree~2 (no brute-forcing is necessary).  For 4-adaptive greedy, the
size of the largest connected component is relevant.

The median approximation ratio for greedy over all \num{47} networks
is \num{1.008}.  This goes down to \num{1.005} for 2-adaptive and to
\num{1.002} for 4-adaptive greedy.  Thus, the number of too many
selected vertices goes down by a factor of \num{1.6} and \num{4},
respectively.  As mentioned above, the size of the largest connected
component is relevant for 4-adaptive greedy.  For \SI{49}{\percent} of
the networks, this was below~\num{100} (which is still a reasonable
size for a brute-force algorithm).  Restricted to these networks,
normal greedy has a median approximation ratio of \num{1.004}, while
4-adaptive again improves by a factor of 4 to \num{1.001}.  Moreover,
the number of networks for which we actually obtain the optimal
solution increases from \num{4} to \num{7}.



\bibliography{hyperbolic_vertex_cover_arxiv}

\begin{thebibliography}{10}

\bibitem{ai-bf-16}
Takuya Akiba and Yoichi Iwata.
\newblock {Branch-and-reduce exponential/FPT algorithms in practice: A case
  study of vertex cover}.
\newblock {\em Theor. Comput. Sci.}, 609:211 -- 225, 2016.
\newblock \href {http://dx.doi.org/10.1016/j.tcs.2015.09.023}
  {\path{doi:10.1016/j.tcs.2015.09.023}}.

\bibitem{gephi-datasets}
Alexandre Arenas, Albert-László Barabási, Vladimir Batagelj, Andrej Mrvar,
  Mark Newman, and Tore Opsahl.
\newblock Gephi datasets.
\newblock \url{https://github.com/gephi/gephi/wiki/Datasets}.

\bibitem{bm-p-06}
Vladimir Batagelj and Andrej Mrvar.
\newblock Pajek datasets.
\newblock \url{http://vlado.fmf.uni-lj.si/pub/networks/data/}, 2006.

\bibitem{bfk-hrg-16}
Thomas Bläsius, Tobias Friedrich, and Anton Krohmer.
\newblock {Hyperbolic Random Graphs: Separators and Treewidth}.
\newblock In {\em 24th Annual European Symposium on Algorithms (ESA 2016)},
  pages 15:1 -- 15:16, 2016.
\newblock \href {http://dx.doi.org/10.4230/LIPIcs.ESA.2016.15}
  {\path{doi:10.4230/LIPIcs.ESA.2016.15}}.

\bibitem{bpk-sihm-10}
Mari{\'a}n Bogun{\'a}, Fragkiskos Papadopoulos, and Dmitri Krioukov.
\newblock Sustaining the internet with hyperbolic mapping.
\newblock {\em Nat. Commun.}, 1:62, 2010.
\newblock \href {http://dx.doi.org/10.1038/ncomms1063}
  {\path{doi:10.1038/ncomms1063}}.

\bibitem{cj-oespa-03}
Liming Cai and David Juedes.
\newblock On the existence of subexponential parameterized algorithms.
\newblock {\em J. Comput. Syst. Sci.}, 67:789 -- 807, 2003.
\newblock \href {http://dx.doi.org/10.1016/S0022-0000(03)00074-6}
  {\path{doi:10.1016/S0022-0000(03)00074-6}}.

\bibitem{ckx-i-10}
Jianer Chen, Iyad~A. Kanj, and Ge~Xia.
\newblock Improved upper bounds for vertex cover.
\newblock {\em Theor. Comput. Sci.}, 411(40):3736 -- 3756, 2010.
\newblock \href {http://dx.doi.org/10.1016/j.tcs.2010.06.026}
  {\path{doi:10.1016/j.tcs.2010.06.026}}.

\bibitem{cfk-pa-15}
Marek Cygan, Fedor~V. Fomin, {\L}ukasz Kowalik, Daniel Lokshtanov, D{\'a}niel
  Marx, Marcin Pilipczuk, Micha{\l} Pilipczuk, and Saket Saurabh.
\newblock {\em {Parameterized Algorithms}}.
\newblock Springer, 2015.

\bibitem{dgd-vccn-13}
Mariana~O. Da~Silva, Gustavo~A. Gimenez-Lugo, and Murilo V.~G. Da~Silva.
\newblock Vertex cover in complex networks.
\newblock {\em Int. J. Mod. Phys. C}, 24(11):1350078, 2013.
\newblock \href {http://dx.doi.org/10.1142/S0129183113500782}
  {\path{doi:10.1142/S0129183113500782}}.

\bibitem{ds-hamvc-05}
Irit Dinur and Samuel Safra.
\newblock On the hardness of approximating minimum vertex cover.
\newblock {\em Ann. Math.}, 162(1):439 -- 485, 2005.
\newblock \href {http://dx.doi.org/10.4007/annals.2005.162.439}
  {\path{doi:10.4007/annals.2005.162.439}}.

\bibitem{dp-cmara-12}
Devdatt~P. Dubhashi and Alessandro Panconesi.
\newblock {\em Concentration of Measure for the Analysis of Randomized
  Algorithms}.
\newblock Cambridge University Press, 2012.

\bibitem{fgk-mcaaea-09}
Fedor~V. Fomin, Fabrizio Grandoni, and Dieter Kratsch.
\newblock A measure \& conquer approach for the analysis of exact algorithms.
\newblock {\em J. ACM}, 56(5):25:1 -- 25:32, 2009.
\newblock \href {http://dx.doi.org/10.1145/1552285.1552286}
  {\path{doi:10.1145/1552285.1552286}}.

\bibitem{fk-dhrg-15}
Tobias Friedrich and Anton Krohmer.
\newblock On the diameter of hyperbolic random graphs.
\newblock In {\em Automata, Languages, and Programming}, pages 614 -- 625.
  Springer Berlin Heidelberg, 2015.
\newblock \href {http://dx.doi.org/10.1007/978-3-662-47666-6_49}
  {\path{doi:10.1007/978-3-662-47666-6_49}}.

\bibitem{gpp-rhg-12}
Luca Gugelmann, Konstantinos Panagiotou, and Ueli Peter.
\newblock Random hyperbolic graphs: Degree sequence and clustering.
\newblock In {\em Automata, Languages, and Programming}, pages 573 -- 585.
  Springer Berlin Heidelberg, 2012.
\newblock \href {http://dx.doi.org/10.1007/978-3-642-31585-5_51}
  {\path{doi:10.1007/978-3-642-31585-5_51}}.

\bibitem{k-barvcp-09}
George Karakostas.
\newblock A better approximation ratio for the vertex cover problem.
\newblock {\em ACM Trans. Algorithms}, 5(4):41:1 -- 41:8, 2009.
\newblock \href {http://dx.doi.org/10.1145/1597036.1597045}
  {\path{doi:10.1145/1597036.1597045}}.

\bibitem{kr-v-08}
Subhash Khot and Oded Regev.
\newblock Vertex cover might be hard to approximate to within
  {$2-\varepsilon$}.
\newblock {\em J. Comput. Syst. Sci.}, 74(3):335 -- 349, 2008.
\newblock \href {http://dx.doi.org/10.1016/j.jcss.2007.06.019}
  {\path{doi:10.1016/j.jcss.2007.06.019}}.

\bibitem{kpk-h-10}
Dmitri Krioukov, Fragkiskos Papadopoulos, Maksim Kitsak, Amin Vahdat, and
  Mari\'an Bogu\~n\'a.
\newblock Hyperbolic geometry of complex networks.
\newblock {\em Phys. Rev. E}, 82:036106, 2010.
\newblock \href {http://dx.doi.org/10.1103/PhysRevE.82.036106}
  {\path{doi:10.1103/PhysRevE.82.036106}}.

\bibitem{k-k-13}
J{\'e}r\^{o}me Kunegis.
\newblock {KONECT}: The koblenz network collection.
\newblock In {\em International Conference on World Wide Web (WWW)}, pages 1343
  -- 1350, 2013.
\newblock \href {http://dx.doi.org/10.1145/2487788.2488173}
  {\path{doi:10.1145/2487788.2488173}}.

\bibitem{lk-sd-14}
Jure Leskovec and Andrej Krevl.
\newblock {SNAP Datasets}: {Stanford} large network dataset collection.
\newblock \url{http://snap.stanford.edu/data}, 2014.

\bibitem{ra-ndrigav-15}
Ryan~A. Rossi and Nesreen~K. Ahmed.
\newblock The network data repository with interactive graph analytics and
  visualization.
\newblock In {\em Proceedings of the Twenty-Ninth AAAI Conference on Artificial
  Intelligence}, 2015.
\newblock URL: \url{http://networkrepository.com}.

\bibitem{tosm-tmpt-17}
Hisao Tamaki, Hiromu Ohtsuka, Takuto Sato, and Keitaro Makii.
\newblock {TCS-Meiji PACE2017-TrackA}.
\newblock \url{github.com/TCS-Meiji/PACE2017-TrackA}, 2017.

\bibitem{xn-eamis-17}
Mingyu Xiao and Hiroshi Nagamochi.
\newblock Exact algorithms for maximum independent set.
\newblock {\em Inf. Comput.}, 255:126 -- 146, 2017.
\newblock \href {http://dx.doi.org/10.1016/j.ic.2017.06.001}
  {\path{doi:10.1016/j.ic.2017.06.001}}.

\end{thebibliography}

\section{Experimental Data}
\label{sec:experimental-data}

Table~\ref{tab:data} (continuing on the next page) shows the raw data
of our experiments for which we reported aggregate values in the
discussion in Section~\ref{sec:discussion}.  The percentage of
dominant vertices among those with high degree (over
$\alpha / (\alpha - 1/2) \cdot \log n$) is rounded to whole
percentages.  The approximation ratios are rounded to three decimal
digits.  Treewidth $-1$ indicates that remaining graph after removing
all dominant vertices contained no edge.

\begin{longtable}{lcrrrrrr}
  \caption{The raw data of our experiments.  The columns
  are: \textbf{(network)} the network's name; \textbf{(easy)} whether
  or not we could compute an optimal solution; \textbf{(dom)}~the
  percentage of
  dominant vertices among the high-degree vertices;
  \textbf{(tw)}~an upper bound for the treewidth of the remaining graph
  after deleting dominant nodes; \textbf{(greedy)} the approximation
  ratio of
  greedy; \textbf{(2-ad)} the approximation
  ratio of 2-adaptive greedy; \textbf{(4-ad)} the approximation
  ratio of
  4-adaptive greedy; \textbf{(comp)} the size of the largest
  component that remains after the greedy phase of 4-adaptive
  greedy.}
  \label{tab:data}\\
  \toprule
  \textbf{network}            & \textbf{easy} & \textbf{dom}       & \textbf{tw} & \textbf{greedy} & \textbf{2-ad} & \textbf{4-ad} & \textbf{comp} \\
  \midrule\endhead
  \bottomrule \endfoot
  advogato                    & \cmark        & \SI{051}{\percent} & 314         & 1.011           & 1.009         & 1.005         & 863           \\
  airlines                    & \cmark        & \SI{028}{\percent} & 23          & 1.000           & 1.000         & 1.000         & 75            \\
  as-22july06                 & \cmark        & \SI{100}{\percent} & 3           & 1.002           & 1.001         & 1.001         & 46            \\
  as-caida20071105            & \cmark        & \SI{100}{\percent} & 3           & 1.002           & 1.001         & 1.000         & 35            \\
  as-skitter                  & \xmark        & \SI{047}{\percent} & 969794      &                 &               &               &               \\
  as20000102                  & \cmark        & \SI{100}{\percent} & 2           & 1.003           & 1.001         & 1.001         & 18            \\
  bio-CE-HT                   & \cmark        & \SI{100}{\percent} & 3           & 1.015           & 1.009         & 1.000         & 225           \\
  bio-CE-LC                   & \cmark        & \SI{100}{\percent} & 2           & 1.003           & 1.003         & 1.003         & 39            \\
  bio-DM-HT                   & \cmark        & \SI{050}{\percent} & 13          & 1.017           & 1.014         & 1.004         & 319           \\
  bio-yeast-protein-inter     & \cmark        & \SI{100}{\percent} & 4           & 1.013           & 1.006         & 1.002         & 147           \\
  bn-fly-drosophila-medulla-1 & \cmark        & \SI{072}{\percent} & 38          & 1.018           & 1.013         & 1.009         & 142           \\
  bn-mouse-kasthuri-graph-v4  & \cmark        & \SI{100}{\percent} & 1           & 1.006           & 1.000         & 1.000         & 12            \\
  ca-AstroPh                  & \cmark        & \SI{094}{\percent} & 6           & 1.003           & 1.002         & 1.001         & 123           \\
  ca-cit-HepPh                & \cmark        & \SI{084}{\percent} & 151         & 1.003           & 1.003         & 1.002         & 533           \\
  ca-CondMat                  & \cmark        & \SI{099}{\percent} & 4           & 1.003           & 1.002         & 1.001         & 53            \\
  ca-GrQc                     & \cmark        & \SI{099}{\percent} & 2           & 1.004           & 1.002         & 1.001         & 44            \\
  ca-HepTh                    & \cmark        & \SI{095}{\percent} & 13          & 1.005           & 1.004         & 1.001         & 174           \\
  cfinder-google              & \xmark        & \SI{066}{\percent} & 82          &                 &               &               &               \\
  cit-HepTh                   & \xmark        & \SI{013}{\percent} & 19737       &                 &               &               &               \\
  citeseer                    & \xmark        & \SI{046}{\percent} & 182372      &                 &               &               &               \\
  com-amazon                  & \cmark        & \SI{093}{\percent} & 2756        & 1.011           & 1.006         & 1.002         & 16209         \\
  com-dblp                    & \cmark        & \SI{100}{\percent} & 7           & 1.002           & 1.001         & 1.000         & 69            \\
  cpan-authors                & \cmark        & \SI{100}{\percent} & 2           & 1.009           & 1.009         & 1.009         & 17            \\
  digg-friends                & \cmark        & \SI{058}{\percent} & 1649        & 1.008           & 1.006         & 1.004         & 179           \\
  ego-facebook                & \cmark        & \SI{100}{\percent} & -1          & 1.000           & 1.000         & 1.000         & 3             \\
  ego-gplus                   & \cmark        & \SI{100}{\percent} & 1           & 1.000           & 1.000         & 1.000         & 5             \\
  email-Enron                 & \cmark        & \SI{085}{\percent} & 41          & 1.003           & 1.002         & 1.001         & 141           \\
  EuroSiS                     & \cmark        & \SI{056}{\percent} & 34          & 1.020           & 1.018         & 1.010         & 274           \\
  facebook-wosn-links         & \xmark        & \SI{027}{\percent} & 36694       &                 &               &               &               \\
  flixster                    & \xmark        & \SI{073}{\percent} & 122         &                 &               &               &               \\
  hyves                       & \cmark        & \SI{098}{\percent} & 1653        & 1.008           & 1.008         & 1.008         & 42            \\
  livemocha                   & \cmark        & \SI{004}{\percent} & 24380       & 1.017           & 1.013         & 1.006         & 25300         \\
  loc-brightkite-edges        & \cmark        & \SI{076}{\percent} & 619         & 1.014           & 1.009         & 1.004         & 4658          \\
\end{longtable}
\setcounter{table}{0}
\begin{longtable}{lcrrrrrr}
  \caption{The raw data of our experiments.  The columns
  are: \textbf{(network)} the network's name; \textbf{(easy)} whether
  or not we could compute an optimal solution; \textbf{(dom)}~the
  percentage of
  dominant vertices among the high-degree vertices;
  \textbf{(tw)}~an upper bound for the treewidth of the remaining graph
  after deleting dominant nodes; \textbf{(greedy)} the approximation
  ratio of
  greedy; \textbf{(2-ad)} the approximation
  ratio of 2-adaptive greedy; \textbf{(4-ad)} the approximation
  ratio of
  4-adaptive greedy; \textbf{(comp)} the size of the largest
  component that remains after the greedy phase of 4-adaptive
  greedy.}
  \\
  \toprule
  \textbf{network}            & \textbf{easy} & \textbf{dom}       & \textbf{tw} & \textbf{greedy} & \textbf{2-ad} & \textbf{4-ad} & \textbf{comp} \\
  \midrule\endhead
  \bottomrule \endfoot
  loc-gowalla-edges           & \xmark        & \SI{064}{\percent} & 3991        &                 &               &               &               \\
  moreno-names                & \cmark        & \SI{094}{\percent} & 3           & 1.006           & 1.004         & 1.002         & 34            \\
  moreno-propro               & \cmark        & \SI{100}{\percent} & 4           & 1.014           & 1.006         & 1.002         & 153           \\
  munmun-twitter-social       & \cmark        & \SI{057}{\percent} & 12          & 1.000           & 1.000         & 1.000         & 5             \\
  OClinks                     & \cmark        & \SI{036}{\percent} & 202         & 1.017           & 1.015         & 1.005         & 498           \\
  p2p-Gnutella04              & \cmark        & \SI{042}{\percent} & 1352        & 1.019           & 1.017         & 1.016         & 970           \\
  p2p-Gnutella05              & \cmark        & \SI{040}{\percent} & 1075        & 1.014           & 1.013         & 1.013         & 447           \\
  p2p-Gnutella06              & \cmark        & \SI{040}{\percent} & 1142        & 1.023           & 1.022         & 1.021         & 820           \\
  p2p-Gnutella08              & \cmark        & \SI{047}{\percent} & 414         & 1.008           & 1.008         & 1.008         & 45            \\
  p2p-Gnutella09              & \cmark        & \SI{047}{\percent} & 419         & 1.005           & 1.005         & 1.005         & 63            \\
  p2p-Gnutella24              & \cmark        & \SI{081}{\percent} & 525         & 1.006           & 1.005         & 1.005         & 70            \\
  p2p-Gnutella25              & \cmark        & \SI{079}{\percent} & 464         & 1.006           & 1.005         & 1.005         & 77            \\
  p2p-Gnutella30              & \cmark        & \SI{079}{\percent} & 604         & 1.005           & 1.005         & 1.004         & 62            \\
  p2p-Gnutella31              & \cmark        & \SI{080}{\percent} & 732         & 1.011           & 1.010         & 1.010         & 65            \\
  petster-carnivore           & \cmark        & \SI{079}{\percent} & 149312      & 1.008           & 1.007         & 1.004         & 9238          \\
  petster-friendship-cat      & \xmark        & \SI{012}{\percent} & 14929       &                 &               &               &               \\
  petster-friendship-dog      & \xmark        & \SI{015}{\percent} & 340634      &                 &               &               &               \\
  petster-friendship-hamster  & \xmark        & \SI{023}{\percent} & 135         &                 &               &               &               \\
  soc-Epinions1               & \cmark        & \SI{082}{\percent} & 238         & 1.006           & 1.003         & 1.001         & 228           \\
  US-Air                      & \cmark        & \SI{067}{\percent} & 4           & 1.013           & 1.000         & 1.000         & 23            \\
  web-Google                  & \xmark        & \SI{084}{\percent} & 103939      &                 &               &               &               \\
  wiki-Vote                   & \cmark        & \SI{044}{\percent} & 384         & 1.054           & 1.052         & 1.050         & 726           \\
  wordnet-words               & \cmark        & \SI{095}{\percent} & 28          & 1.004           & 1.003         & 1.002         & 59            \\
  YeastS                      & \cmark        & \SI{070}{\percent} & 39          & 1.013           & 1.012         & 1.005         & 244           \\
  youtube-links               & \cmark        & \SI{086}{\percent} & 1239        & 1.008           & 1.004         & 1.001         & 570           \\
  youtube-u-growth            & \xmark        & \SI{090}{\percent} & 59358       &                 &               &               &               \\
\end{longtable}

\end{document}